\documentclass[a4paper,12pt]{article}

\usepackage[margin=1in]{geometry}

\usepackage{amsmath,amssymb,amsthm}
\usepackage{graphicx}
\usepackage{dsfont}
\usepackage{hyperref}
\usepackage{color}
\usepackage{authblk}

\newcommand{\cost}{{\text{cost}}}
\newcommand{\weight}{{\text{w}}}
\newcommand{\weightbar}{{\overline{\text{w}}}}
\newcommand{\find}{\textsc{Find}}
\newcommand{\rootedfind}{\textsc{RootedFind}}
\newcommand{\set}[1]{\{#1\}}

\makeatletter
\newtheorem*{rep@theorem}{\rep@title}
\newcommand{\newreptheorem}[2]{%
	\newenvironment{rep#1}[1]{%
		\def\rep@title{#2 \ref{##1}}%
		\begin{rep@theorem}}%
		{\end{rep@theorem}}}
\makeatother

\newenvironment{lemma-repeat}[1]{\begin{trivlist}
		\item[\hspace{\labelsep}{\bf\noindent Lemma \ref{#1} }]\em }%
	{\end{trivlist}}
\newenvironment{theorem-repeat}[1]{\begin{trivlist}
		\item[\hspace{\labelsep}{\bf\noindent Theorem \ref{#1} }]\em }%
	{\end{trivlist}}

\newtheorem{theorem}{Theorem}
\newtheorem{lemma}[theorem]{Lemma}

\newtheorem{claim}[theorem]{Claim}
\newtheorem{definition}[theorem]{Definition}

\newtheorem{oq}[theorem]{Open Question}

\def\eps{\varepsilon}
\def\poly{poly}
\def\MaxD{\text{Max-Dist}}
\def\SumD{\text{Sum-Dist}}
\def\HamD{\text{Ham-Dist}}

\newcommand{\R}{\mathds{R}}
\newcommand{\cut}{\text{cost}}

\begin{document}

\title{New Hardness Results for Planar Graph Problems in P and an Algorithm for Sparsest Cut}
\author[1]{Amir Abboud}
\affil[1]{%
  IBM Almaden Research Center\\
amir.abboud@gmail.com
}
\author[2]{Vincent Cohen-Addad}
\affil[2]{%
  Sorbonne Universit\'e, UPMC Univ Paris 06, CNRS, LIP6\\
vcohenad@gmail.com}
\author[3]{Philip N. Klein}
\affil[3]{%
  Brown University\\
klein@brown.edu}

	\maketitle

\begin{abstract}
  The Sparsest Cut is a fundamental
  optimization problem that has been extensively studied. For planar
  inputs the problem is in $P$ and can be solved in $\tilde{O}(n^3)$ time if all vertex weights are $1$.
  Despite a significant amount of effort, the best algorithms date back to the early 90's and can only achieve $O(\log n)$-approximation in $\tilde{O}(n)$
  time or a constant factor approximation in $\tilde{O}(n^2)$ time [Rao, STOC92].
  Our main result is an $\Omega(n^{2-\eps})$ lower bound for Sparsest Cut even in planar graphs with unit vertex weights, under
  the $(min,+)$-Convolution conjecture, showing that approximations are inevitable in the near-linear time regime.
  To complement the lower bound, we provide a constant factor approximation
  in near-linear time, improving upon the 25-year old result of Rao in both time and accuracy.
  
  Our lower bound accomplishes a repeatedly raised challenge by being \emph{the first fine-grained lower bound} for a natural planar graph problem in P.
  Moreover, we prove near-quadratic lower bounds under SETH for variants of the closest pair problem in planar graphs, and use them to show that the popular Average-Linkage procedure for Hierarchical Clustering cannot be simulated in truly subquadratic time.

  At the core of our constructions is a diamond-like gadget that also settles the complexity of Diameter in \emph{distributed} planar networks. 
  We prove an $\Omega(n/\log{n})$ lower bound on the number of communication rounds required to compute the weighted diameter of a network in the CONGEST model, even when the underlying graph is planar and all nodes are $D=4$ hops away from each other. This is the first $\poly(n) + \omega(D)$ lower bound in the planar-distributed setting, and it complements the recent $poly(D, \log{n})$ upper bounds of Li and Parter [STOC 2019] for (exact) unweighted diameter and for ($1+\eps$) approximate weighted diameter.

\end{abstract}

\section{Introduction}

\paragraph{Cuts in Planar Graphs.}

The Sparsest Cut problem is among the most fundamental optimization problems.
It is NP-hard and one of the most important problems in the field of approximation algorithms; through the years, it has led to the design of new, powerful algorithmic techniques (e.g. the $O(\sqrt{\log n})$-approximation of Arora, Rao, and Vazirani \cite{arora2009expander}), and is also increasingly becoming a keystone of divide-and-conquer strategies for a variety of problems arising in graph compression \cite{dhulipala2016compressing}, clustering \cite{Das:2016,Cohen-AddadKMM19,CharikarC17}, and beyond.
The goal is to cut the graph into two roughly balanced parts while cutting as few edges as possible.

This paper studies this problem in \emph{planar graphs} where it is solvable in (weakly) polynomial time, in part because the cut can be shown to be a cycle in the dual of the graph. 
Let us mention two motivating reasons.
First, finding sparse cuts in planar graphs is of high interest
in applications such as network evaluation~\cite{DBLP:conf/wea/SchildS15}, VLSI design~\cite{bhatt1984framework,leiserson1980area},
and more~\cite{lipton1977applications}. 
For instance, sparse cuts are used to identify portions
of road networks that may suffer from congestion, or to design good VLSI layouts.
A second motivation comes from
finding \emph{optimal} separators, a ubiquitous subtask in planar graph algorithms.
The classic result of Lipton and Tarjan~\cite{lipton1979separator}
shows that any bounded-degree
planar graph has a balanced vertex separator of size $O(\sqrt{n})$, but this bound may be suboptimal in many cases (see for example~\cite{DBLP:conf/wea/SchildS15}). 
In such non-worst-case instances, algorithms for finding better separators could speed up many algorithms.

There are two common ways to define the value (or sparsity) of a cut: we divide its cost by either the weight of the smaller of the two sides, or their product.
The former definition is more standard and easier to work with in planar graphs; we will refer to it as MQC, defined below.
We will refer to the other one, that asks for a cut $S$ that minimizes $\frac{\text{cost}(S)}{ w(S) \cdot w(V-S)}$, simply as Sparsest Cut.

\begin{definition}[The Minimum Quotient Cut Problem (MQC)]
Given a graph $G=(V,E)$ with edge costs $c : E \mapsto \R^+$
and vertex weights $w : V \mapsto \R^+$, find the cut $(S,V-S), S \subseteq V$ minimizing the quotient:
$$
\text{quotient}(S) := \frac{\text{cost}(S)}{\min \{ w(S),w(V-S) \}}
$$  
where $\text{cost}(X) := \sum_{(u,v) \in E\atop u \in X, v \notin X} c(u,v)$ and $w(X)=\sum_{u \in X} w(u)$.
\end{definition}

The study of Sparsest Cut in planar graphs
dates back to Rao's first paper in the 80's \cite{Rao87}, and to subsequent works by Rao~\cite{Rao92}
and by Park and Phillips~\cite{ParkPhillips}. 
The first \emph{exact} algorithm was by Park and Phillips and had a running time of $\tilde{O}(n^2 W)$ where $W$ is the sum of the vertex weights. Here the $\tilde{O}(\cdot)$ notation hides logarithmic factors in $n$, $W$, and $C$ the sum of edge costs.
Note that in the ``unweighted'' case of unit vertex weights, $W=n$ and this upper bound is $\tilde{O}(n^3)$.
They also showed that the problem is weakly NP-Hard, and therefore cannot be solved exactly in $\tilde{O}(poly(n))$ time.
Rao's work gave a $3.5$-approximation for MQC in $\tilde{O}(n^2)$ time, and an $O(\log{n})$ approximation in $\tilde{O}(n)$ time.

Since then, there has been progress on related problems such as the Minimum Bisection problem where we want to find the cut of minimum cost that is balanced, i.e. $w(S)=w(V-S)=W/2$.
Rao~\cite{Rao87} showed that an approximation algorithm for MQC can be used to approximate Minimum Bisection in the following bi-criteria way: we return a cut with at most two-thirds of the
vertex weight on each side and with cost that is $O(\log n)$ times
the cost of the minimum bisection. 
Garg et al.~\cite{GargSV99} gave a different algorithm with similar bicriteria guarantees where the cost is only a factor of $2$ away from the optimal. This
involves iterative application of the exact algorithm of Park and Phillips.
More recently, Fox et al.~\cite{DBLP:conf/stoc/FoxKM15} gave a polynomial-time bicriteria
approximation scheme for Minimum Bisection but the algorithm runs in time
$n^{\text{poly}(1/\eps)}$.

Still, almost no progress has been made on Sparsest Cut since the early 90's.
In the most interesting regime of near-linear running times, Rao's $O(\log{n})$-approximation is the best known, and there is no exact algorithm running in time $\tilde{o}(n^2W)$.
The gaps are large, but the most pressing question is:

\begin{oq}
Can Sparsest Cut in planar graphs be solved \emph{exactly} in near-linear time?
\end{oq}

Given that the upper bound is longstanding, it is natural to try to use the recent tools of fine-grained complexity in order to resolve this question negatively. 
Can we show that a linear time algorithm would refute SETH or one of the other popular conjectures? 
This is challenging because this field has not been successful in proving \emph{any conditional lower bound for a planar graph problem in P}, not to mention a natural and important problem like Sparsest Cut.
Nonetheless, our main result is a quadratic conditional lower bound even for the unit-vertex-weight version of Sparsest Cut where the upper bound is cubic, and it also applies for MQC and Minimum Bisection. 
The lower bound is based on the hypothesis that the basic $(min,+)$-Convolution problem requires quadratic time.
This hardness assumption was recently highlighted by Cygan et al. \cite{DBLP:conf/icalp/CyganMWW17} after being used in other papers \cite{backurs2017better,laber2014lower,DBLP:conf/icalp/KunnemannPS17,bateni2018fast}.
It is particularly appealing because it implies both the $3$-SUM and the All-Pairs Shortest Paths conjectures, and therefore also all the dozens of lower bounds that follow from them (see \cite{DBLP:conf/icalp/CyganMWW17,virginia_ICM}).

\begin{theorem}
\label{thm:sparsestcut}
If for some $\eps>0$, the Sparsest Cut, the Minimum Quotient Cut, or the Minimum Bisection problems can be solved in $O(n^{2-\eps})$ time in planar graphs of treewidth $3$ on $n$ vertices with unit vertex-weights and total edge cost $C= n^{O(1)}$, then the $(min,+)$-Convolution problem can be solved in $O(n^{2-\eps})$ time.
\end{theorem}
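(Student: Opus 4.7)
The plan is to reduce (a standard decision variant of) $(\min,+)$-Convolution to Minimum Bisection on a planar, treewidth-$3$ graph with unit vertex weights and polynomial edge costs, and then to extend the reduction to MQC and Sparsest Cut. The decision variant asks, given integer sequences $a, b, d$ of length $n$, whether $\min_{i+j=k}(a[i]+b[j]+d[k]) \leq T$; for fine-grained purposes this is tight with the functional $(\min,+)$-Convolution problem.

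The core construction is a ``theta'' graph: two poles $s, t$ joined by three internally disjoint paths $P_A$, $P_B$, $P_D$ of $\Theta(n)$ vertices each, whose edges carry the values of $a$, $b$, and $d$ respectively. This skeleton is already planar with treewidth $2$; to reach treewidth exactly $3$ and, more importantly, to enforce the structural properties that the analysis needs, each path edge will be replaced by a small $K_4$-style ``diamond'' gadget, consistent with the ``diamond-like gadget'' advertised in the introduction. The number of vertices is $O(n)$, vertex weights are $1$ by construction, and the total edge cost stays polynomial.

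The heart of the argument is to show that the optimal cut $(S, V-S)$, with $s \in S$ and $t \notin S$, is \emph{rectangular}: it breaks each of the three paths at exactly one edge, at positions $i$ in $P_A$, $j$ in $P_B$, and $k$ in $P_D$. Rectangularity is enforced by the diamond gadgets together with mild convex penalties along each path, so that zigzagging is strictly worse than any single break. For a rectangular cut, the cost is $a[i]+b[j]+d[k]$, while the $s$-side count is $|S| = C_0 + i + j - k$ once $P_D$ is oriented opposite to $P_A, P_B$ so that $k$ contributes $|P_D| - k$. Choosing the path lengths so that the bisection constraint $|S| = |V|/2$ becomes exactly $i+j-k = 0$, the Minimum Bisection value equals $\min_{i+j=k}(a[i]+b[j]+d[k])$, which decides the conjecture instance.

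The extension to MQC and Sparsest Cut is obtained by padding $G$ with dummy vertices attached symmetrically to $s$ and $t$, forcing the unique near-optimal cuts to be close to balanced; then $\min(|S|, |V-S|)$ and $|S| \cdot |V-S|$ become essentially known constants, and the three objectives agree up to a fixed scaling. The main obstacle throughout is obtaining rectangularity \emph{together with} precise control of $|S|$ in a unit-vertex-weight, planar, treewidth-$3$ graph: these two requirements pull against each other, because unit vertex weights mean that any structural padding changes both side sizes and cut costs, and planarity $+$ treewidth $3$ severely limits the allowable gadgets. This is the step where the diamond gadget earns its keep, and is likely the main technical contribution of the proof.
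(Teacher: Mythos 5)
Your high-level plan matches the paper's: reduce from a decision/verification variant of $(\min,+)$-Convolution (the paper uses the ``Upper Bound'' variant of Cygan et al.), build a theta graph with two poles and three paths carrying $A$, $B$, and a reversed/complemented $C$, argue the optimal cut crosses each path exactly once, and use a balance constraint to force $i+j=k$ so that the cut cost compares $a_i+b_j$ with $c_k$. However, the proposal is missing the two mechanisms that actually make this work, and what you put in their place would fail. First, rectangularity: you attribute it to unspecified ``$K_4$-style diamond gadgets'' and ``mild convex penalties,'' but mild penalties do not rule out, e.g., a balanced cut that slices out a middle segment of $P_A$ and a middle segment of $P_B$ using four cheap edges, which can easily beat the intended three-edge cut, so the bisection value would no longer equal the convolution minimum. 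The paper's fix is blunt and essential: every path edge gets a uniform additive cost $\beta = 4Tn^2$, so any cut using four or more path edges costs at least $4\beta > 3\beta + 2T$; and for MQC/Sparsest Cut, double-crossings of a single path are excluded via the Park--Phillips fact that optimal cuts are \emph{simple} cycles in the dual. (The diamond gadget in the paper belongs to the Diameter/CONGEST and closest-pair reductions, not to this one; treewidth $3$ is only an upper bound and needs no gadgetry.)

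Second, the unit-weight and MQC/Sparsest-Cut handling is underspecified in a way that breaks the reduction. The paper gives the poles large \emph{asymmetric} weights ($10n$ and $11n$) so that (i) any cut not separating the poles has a denominator at most $3n$ and is provably worse than a pole-separating cut, and (ii) balance collapses exactly to $i+j=k$; unit weights are then recovered by attaching pendant vertices to the poles via edges of cost $1210n^2(2\beta+T)$, which are never cut. Your ``dummy vertices attached symmetrically to $s$ and $t$'' omits the requirement that the attachment edges be prohibitively expensive: with cheap attachment edges, the sparsest cut (or minimum quotient cut) is simply the cut isolating a single dummy vertex, and the reduction says nothing about the convolution instance. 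Finally, the claim that the denominators ``become essentially known constants, and the three objectives agree up to a fixed scaling'' skips the quantitative step the paper needs: when $i+j\neq k$ the denominator improves only by $\pm 1$ while the numerator is at least $3\beta+1$, and one must verify (using $3\beta/12n > 10T$) that this trade-off can never beat the threshold $(3\beta+T)/12n$ (resp.\ its square for sparsity). Without the large $\beta$, the heavy never-cut attachment edges, and this comparison, the argument does not go through.
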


After settling the high-order question it is easier to direct our energies into decreasing the gaps. 
A natural next question is whether there could be a cubic lower bound, which would completely settle the exact case.
We show that this is not the case; a natural use of $O(\sqrt{n})$-size separators in the right way inside the Park and Phillips algorithm reduces the running time to $n^{2.5}$.
Figuring out the exact exponent remains an important open question. It seems that new algorithmic techniques will be required to bring the upper bound down to $O(nW)$, yet we do not know of hard instances that seem to require super-quadratic time.

\begin{theorem}
\label{thm:sparsestcut:exact}
The Sparsest Cut and Minimum Quotient Cut problems in planar graphs on $n$ vertices with total vertex-weight $W$ and total edge costs $C$ can be solved in $O(n^{3/2}W \log(CW))$ time.
\end{theorem}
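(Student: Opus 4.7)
The plan is to refine the Park--Phillips $\tilde{O}(n^2 W)$ MQC algorithm by plugging in an $O(\sqrt{n})$-balanced planar separator in the right way, saving a factor of $\sqrt{n}$ over the baseline.

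First, I would reduce computing the optimum quotient $\lambda^\star$ to $O(\log(CW))$ decision problems by binary-searching on $\lambda$; this is valid because $\lambda^\star = a/b$ with integers $a \le C$ and $b \le W$. Each decision problem asks whether $\min_S[\cost(S) - \lambda \cdot \min(w(S), w(V\setminus S))] \le 0$, and by symmetry we may restrict to $w(S) \le W/2$ and further enumerate the target side-weight $k = w(S) \in \{1,\ldots,\lfloor W/2\rfloor\}$, which is the origin of the outer $W$ factor.

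Second, I would exploit planar duality: for each target weight $k$, the minimum-cost cut with $w(S)=k$ corresponds to a shortest cycle in the dual graph bounding a face set of total primal weight $k$, so the whole weight-indexed family of such cuts can be read off from dual shortest-path information. I would compute an $O(\sqrt{n})$-length cycle separator \`a la Miller; the key structural point is that every non-trivial cut either crosses the separator -- in which case its bounding dual cycle passes through one of the $O(\sqrt{n})$ faces incident to the separator -- or lies entirely on one side, in which case the subproblem restricts to a piece of at most $2n/3$ vertices. Apply Klein's multiple-source shortest-path algorithm to the dual from each separator face; each invocation costs $\tilde{O}(n)$, for a total of $\tilde{O}(n^{3/2})$. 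Sweeping over the $O(W)$ target weights $k$ in order and updating the parametric dual shortest paths incrementally (as in Park--Phillips) yields $\tilde{O}(n^{3/2} W)$ work per decision round, and multiplying by the $O(\log(CW))$ binary-search rounds gives the claimed bound.

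The main obstacle I anticipate is combining the separator-crossing cuts with the separator-avoiding ones without paying an extra recursion-depth penalty: the straightforward recurrence $T(n) = 2T(2n/3) + \tilde{O}(n^{3/2} W)$ would solve to $\Theta(n^{\log_{3/2} 2} W)$, which exceeds $\tilde{O}(n^{3/2} W)$. The remedy is to have the ``inside'' and ``outside'' subproblems reuse the multiple-source shortest-path data structures built at the top level rather than solving them from scratch; each subpiece then pays only linearly in its own size per weight value, and the aggregate running time collapses to the desired $O(n^{3/2} W \log(CW))$. Handling this interface between separator layers carefully, together with the Lagrangian perturbation of dual edge weights that enforces the $w(S)=k$ constraint during the parametric shortest-path sweep, is the delicate part of the argument.
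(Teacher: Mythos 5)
Your high-level idea is the same as the paper's: run the Park--Phillips machinery, but pay the expensive per-root computation only for the $O(\sqrt{n})$ roots on a balanced separator of the dual, and recurse on the two sides when the optimal cycle avoids the separator. However, two steps in your write-up do not hold up as stated. First, the mechanism you give for handling the enclosed-weight constraint is not valid: Klein's multiple-source shortest paths on the plain dual graph (cost $\tilde{O}(n)$) carries no information about how much face-weight a cycle encloses, and a Lagrangian perturbation $\cost - \lambda\cdot\weight$ cannot \emph{enforce} the equality $w(S)=k$ --- it only trades off cost against weight. Park--Phillips track the enclosed weight by working in a layered product graph with $O(nW)$ vertices (vertex $(v,y)$ for each dual vertex $v$ and each accumulated weight $y$), and the correct statement of the improvement is simply: run SSSP in that product graph from the $O(\sqrt{n})$ separator roots, at $\tilde{O}(nW)$ each, giving $\tilde{O}(n^{3/2}W)$ per level. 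Your ``incremental parametric sweep over the $W$ target weights'' is asserted without any update mechanism and is not something Park--Phillips provide; as written the per-round cost claim is unsupported. Relatedly, your outer binary search on $\lambda$ is redundant (once you have, for every root and every $y\le W/2$, the cheapest cycle enclosing weight $y$, you read off the optimum directly), and it stacks an extra $\log(CW)$ factor on top of the logs already hidden in your $\tilde{O}$.

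Second, your ``main obstacle'' is a miscalculation, and your remedy for it is both unnecessary and unjustified. The recurrence is not $T(n)=2T(2n/3)+\tilde{O}(n^{3/2}W)$: the two sides of the separator partition the dual vertices, so the subproblem sizes satisfy $n_1+n_2\le n+O(\sqrt{n})$ with each $n_i\le 2n/3$, and since the local work $n^{3/2}W$ is superlinear in $n$, the level sums decrease geometrically ($n_1^{3/2}+n_2^{3/2}\le c\,n^{3/2}$ for a constant $c<1$) and the total is $O(n^{3/2}W\log(CW))$ by a direct master-theorem argument --- this is exactly what the paper does, with no reuse of shortest-path data structures across recursion levels. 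Your proposed fix (sharing MSSP structures between the top level and the contracted subpieces) is not argued and would itself need a proof. Finally, a correctness detail you omit: when recursing on one side of the separator you must contract the other side into a single face carrying its total vertex weight; otherwise the quotient/sparsity of cuts inside the piece is computed against the wrong denominator.
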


Since near-linear time algorithms are the most desirable, perhaps the next most pressing question is whether the $O(\log{n})$ approximation of Rao is the best possible:

\begin{oq} 
Is there an $O(1)$-approximation algorithm for MQC in near-linear time?
\end{oq}

We give such an algorithm.  It combines several techniques with new ideas; the main advantage comes from finding and utilizing a node that is guaranteed to be close to the optimal cycle rather than on it. 

\begin{theorem}
  \label{thm:sparsestcut:approx}
The Minimum Quotient Cut problem in planar graphs on $n$ vertices with
total vertex-weight $W$ and total edge cost $C$ can be approximated to
within an $O(1)$ factor in time $n \log^{O(1)}(CWn)$.
\end{theorem}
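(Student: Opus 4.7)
The plan is to cast MQC as a minimum-ratio cycle problem on the planar dual and beat the near-quadratic barrier by combining a weight-scale guess with a carefully chosen anchor. A cut $(S, V-S)$ of the primal $G$ corresponds to a disjoint union of simple cycles in the dual $G^*$, and a standard decomposition argument reduces the search to simple dual cycles at a constant-factor loss in the quotient. The task becomes minimizing $\cost(C)/\min\{w_{\mathrm{in}}(C),w_{\mathrm{out}}(C)\}$ over simple cycles $C$ in $G^*$, where $w_{\mathrm{in}}, w_{\mathrm{out}}$ denote the primal weights inside and outside $C$. I would then guess the weight of the lighter side at each of the $O(\log W)$ geometric scales $W_k = W/2^k$: the correct $k$ satisfies $\min\{w(S), w(V-S)\} \in [W_k, 2W_k)$, so at every scale it suffices to produce an $O(1)$-approximation to the minimum-cost cycle enclosing primal-weight in this band.

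For each scale I would identify a constant number of \emph{anchor faces} in $G^*$ guaranteed to be close to a near-optimal cycle, rather than on it. The anchors are produced from a weighted centroid decomposition of $G$: the centroid property ensures that any cut whose lighter side has weight in $[W_k, 2W_k)$ has its dual cycle passing through a bounded neighborhood of at least one among $O(1)$ candidate faces. From each such anchor $a$ I would compute a shortest-path tree rooted at $a$ in near-linear time via planar Dijkstra; every non-tree edge $e=(u,v)$ then determines a unique fundamental dual cycle of cost $\cost(e) + d(a,u) + d(a,v)$, and its enclosed primal-weight can be retrieved in $O(\log n)$ time by a range-sum query over an Euler-tour linearization of the tree. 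Taking the minimum quotient over all non-tree edges, all $O(\log W)$ scales, and all $O(1)$ anchors yields the claimed $n \log^{O(1)}(CWn)$ running time.

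The main obstacle will be the structural claim that at least one candidate anchor sits close to the optimum cycle $C^*$ in the sense that some fundamental cycle of the shortest-path tree from that anchor has cost $O(\cost(C^*))$ while enclosing primal-weight $\Theta(\min\{w(S), w(V-S)\})$. The intuition is that a cycle enclosing weight $\Theta(W_k)$ at near-minimum cost cannot meander far from the weighted centroid, because surrounding a large weight-ball forces the cycle to be long by the planar isoperimetric inequality, so a cheap near-optimal cycle must be roughly concentric around the centroid. I would formalize this with a charging argument that routes primal weight from the centroid to $C^*$ along dual shortest paths, then invokes a short-cutting step to extract from the shortest-path tree a fundamental cycle comparable in both cost and enclosed weight. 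A secondary subtlety is that the centroid decomposition must be adapted to the weight-scale being considered, since very small cuts need finer centroid refinements; this is handled by recursing the decomposition $O(\log W)$ levels deep, which is absorbed into the near-linear overall running time.
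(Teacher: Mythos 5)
There are two genuine gaps, and they sit exactly where the real work of the theorem lies. The first is your structural claim that, for each weight scale $W_k$, a weighted centroid decomposition yields $O(1)$ anchor faces one of which must lie near a near-optimal cycle. With arbitrary edge costs and arbitrary vertex/face weights there is no isoperimetric relation between the cost of a dual cycle and the weight it encloses, so a cheap cycle enclosing weight $\Theta(W_k)$ need not be anywhere near a centroid: the lighter side can be a small, cheap pocket of weight sitting in an arbitrary part of the graph, and at finer scales there can be $\Theta(n)$ such candidate locations, not $O(1)$. Moreover ``close to the anchor'' is meaningless unless the closeness is measured in edge-cost and calibrated to $\cost(C^*)$, which you never guess; you only guess the weight scale. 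The paper's proof has to work much harder precisely here: it binary-searches the quotient value $\lambda$, guesses the cost scale $\tau$ of the optimal cycle, builds a recursive decomposition by \emph{shortest-path cycle separators} (so that any cycle leaving a cluster must cross one of $O(1)$ shortest paths), intersects clusters with annuli of width $O(\tau)$, and places an $\epsilon\tau$-net on those paths. Only this combination guarantees a candidate root $s$ within cost $\epsilon\tau$ of the optimal cycle, and it produces near-linearly many candidate roots with total subgraph size $O(n\log n)$, not a constant number of anchors.

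The second gap is the claim that, from a good anchor $a$, it suffices to enumerate fundamental cycles of non-tree edges of the shortest-path tree and evaluate their quotient. Even if some near-optimal cycle passes within distance $\epsilon\tau$ of $a$, none of the fundamental cycles of the tree rooted at $a$ need simultaneously have cost $O(\cost(C^*))$ and enclosed weight $\Theta(\min\{w(S),w(V-S)\})$; the enclosed weight of a fundamental cycle through an edge of $C^*$ can be badly wrong, and controlling the \emph{minimum} of inside and outside weight is exactly the delicate part. The paper instead uses the Park--Phillips weight-transfer construction (signed dart weights equal to enclosed weight of fundamental cycles), reduces ratio minimization to detecting a negative-cost cycle under $\widehat{\cost}(d)=\cost(d)-\lambda\,\weight(d)$ via the Klein--Mozes--Weimann algorithm, keeps fundamental cycles only as candidates for \emph{heavy} darts, and then applies Rao-style weight-reduction (attaching the cycle to $s$ at extra cost $2(1+\epsilon)\tau$ and repeatedly rerouting along the tree) to force the enclosed weight into a constant-fraction band of $W$. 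Your proposal contains no substitute for this negative-cycle/weight-reduction machinery, and the ``charging argument along dual shortest paths'' you sketch for the anchor claim would still leave both the cost calibration and the enclosed-weight control unproven.
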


\paragraph{New Hardness Results in Planar Graphs.}
Theorem~\ref{thm:sparsestcut} finally resolves a repeatedly raised challenge in fine-grained complexity: \emph{Are there natural planar graph problems in P for which we can prove a conditional $\omega(n)$ lower bound?}
The list of problems with such lower bounds under SETH or other conjectures is long, exhibiting problems on graphs \cite{WW18}, strings \cite{AWW14,BI15}, geometric data \cite{GO12,Bringmann14,BDT16}, trees \cite{ABHWZ16,BGMW18}, dynamic graphs \cite{AV14,HKNS15}, compressed strings \cite{ABBK17}, and more\footnote{For a more extensive list see the survey in \cite{virginia_ICM}.}.
Perhaps the most related  results are the lower bounds for problems on \emph{dynamic} planar graphs \cite{AD16} but those techniques do not seem to carry over to the more restricted setting of (static) graphs.
Indeed, the above question has been raised repeatedly, even after \cite{AD16}, including in the best paper talk of Cabello at SODA 2017 \cite{Cabello17}.
The search for an answer to this question has been remarkably fruitful from the viewpoint of upper bounds; Cabello's breakthrough (a subquadratic time algorithm for computing the diameter of a planar graph) came following attempts at proving a quadratic lower bound (such a lower bound holds in sparse but non-planar graphs \cite{RV13}), and the techniques introduced in his work (mainly Abstract Voronoi Diagrams) have led to major breakthroughs in planar graph algorithms (see \cite{stoc19_planar,CADW17,GKMS18,GMWW18}).

Strong lower bounds were found for some restricted graph classes such as graphs with logarithmic treewidth \cite{AWW16} (e.g. a quadratic lower bound for Diameter); but these are incomparable with planar graphs.
For some problems such as subgraph isomorphism there are lower bounds even for trees \cite{ABHWZ16},  a restricted kind of planar graphs; however these problems are not in P when the graphs are planar but not trees.
Many hardness results are known for geometric problems on points in the plane (e.g. \cite{GO12,BH01,BDT16}); while related in flavor, the techniques are specific to the euclidean nature of the data and it is not clear how to extract lower bounds for natural graph problems out of these results.

The main challenge, of course, is in designing \emph{planar} gadgets and constructions that are capable of encoding fine-grained reductions. While this has already been accomplished in other contexts such as NP-hardness proofs or in parameterized complexity, those techniques do not work under the more strict efficiency requirements that are needed for fine-grained reductions.
From the perspective of lower bounds, the main contribution of this paper is in coming up with a planar construction that exhibits the super-linear complexity of basic problems like Sparsest Cut. 
By extracting the core gadget from this construction and building up on it we are able to prove lower bounds for other, seemingly unrelated problems on planar graphs.

Notably, our constructions are not only planar but also have very small treewidth of two or three, but crucially not one since our problems become easy on trees. This might be of independent interest. 

\paragraph{Closest Pair of Sets and Hierarchical Clustering.} 
Hierarchical Clustering (HC) is a ubiquitous task in data science and machine learning. Given a data set of $n$ points with some similarity or distance function over them (e.g. points in Euclidean space, or the nodes of a planar graph with the shortest path metric), the goal is to group similar points together into clusters, and then recursively group similar clusters into larger clusters. 
Perhaps the two most popular procedures for HC are Single-Linkage and Average-Linkage.
Both are so-called \emph{agglomerative} HC algorithms (as opposed to \emph{divisive}) since they proceed in a bottom-up fashion:
In the beginning, each data point is in its own cluster, and then the most similar clusters are iteratively merged - creating a larger cluster that contains the union of the points from the two smaller clusters - until all points are in the same, final cluster.

The difference between the different procedures is in their notion of similarity between clusters, which determines the choice of clusters to be merged.
In Single-Linkage the distance (or \emph{dissimilarity}) is defined as the minimum distance between any two points, one from each cluster.
While in Average-Linkage we take the average instead of the minimum.
It is widely accepted that Single-Linkage is sometimes simpler and faster, but the results of Average-Linkage are often more meaningful. 
Extensive discussions of these two procedures (and a few others, such as Complete-Linkage where we take the max, rather than min or average) can be found in many books (e.g. \cite{friedman2001elements,leskovec2014mining,IRbook,handbook}), surveys (e.g. \cite{murtagh83,Murtagh92comments,Carlsson10}), and experimental studies (e.g. \cite{scikit-learn}). 

Both of these procedures can be performed in nearly quadratic time and a faster, subquadratic implementation is highly desirable. Some subquadratic algorithms that try to approximate the performance of these procedures have been proposed, e.g. \cite{CM15,our_neurips}.
However, it is often observed that an exact implementation is at least as hard as finding the closest pair (of data points), since they are the first pair to be merged.
Indeed, if the points are in Euclidean space with $\omega(\log{n})$ dimensions, the Closest Pair problem requires quadratic time under SETH \cite{AW15,SM19}, and therefore these procedures cannot be sped up without a loss.

But what if we are in the planar graph metric? 
This argument breaks down because the Closest Pair problem is trivial in planar graphs (the minimum weight edge is the answer).
Moreover, the Single-Linkage procedure can be implemented to run in near-linear time in this setting, since it reduces to the computation of a minimum spanning tree \cite{MST_SL69}.
In fact, subquadratic algorithms are known for many other metrics that have subquadratic closest pair algorithms such as spaces with bounded doubling dimension \cite{march2010fast}, and efficient approximations are known when the closest pair can be approximated efficiently \cite{our_neurips}.
This naturally leads to the question:

\begin{oq}
Can Average-Linkage be computed in subquadratic time in any metric where the closest pair can be computed in subquadratic time? 
\end{oq}

Surprisingly to us, it turns out that the answer is no. 
In this paper we prove a near-quadratic lower bound under SETH for simulating the Average-Linkage and Complete-Linkage procedures in planar graphs, by proving a lower bound for variants of the closest \emph{pair of sets} problem which are natural problems of independent interest: 
We are given a planar graph on $n$ nodes that are partitioned into $O(n)$ sets and the goal is to find the pair of sets that minimizes the sum (or max) of pairwise distances.
An $O(n^2)$ upper bound is easy to obtain from an all-pairs shortest paths computation.

\begin{theorem}
\label{thm:HC}
If for some $\eps>0$, the Closest Pair of Sets problem, with sum-distance or max-distance, in unweighted planar graphs on $n$ nodes can be solved in $O(n^{2-\eps})$ time, then SETH is false.
Moreover, if for some $\eps>0$ the Average-Linkage or Complete-Linkage algorithms on $n$ node planar graphs with edge weights in $[O(\log{n})]$ can be simulated in $O(n^{2-\eps})$ time, then SETH is false.
\end{theorem}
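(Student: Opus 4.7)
The plan is to reduce the bichromatic Orthogonal Vectors problem to Closest Pair of Sets in unweighted planar graphs; the first sentence of the theorem then follows immediately, and the second sentence follows by a short simulation argument on top of it. Given families $U, V \subseteq \{0,1\}^d$ of $n$ vectors each with $d = \Theta(\log n)$, I would build a planar graph $G$ on $O(nd) = n^{1+o(1)}$ vertices together with a partition of $V(G)$ into $2n + O(1)$ sets $\{S_u\}_{u \in U} \cup \{S_v\}_{v \in V}$ plus $O(1)$ fillers, so that the minimum of $\SumD(S_u,S_v)$ (respectively $\MaxD(S_u,S_v)$) over cross pairs falls below a fixed threshold iff $U,V$ contain an orthogonal pair. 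Because $d = \Theta(\log n)$, a truly-subquadratic Closest Pair of Sets algorithm then yields a truly-subquadratic OV algorithm, refuting SETH.

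The core of the construction would be the diamond-like gadget already used for Sparsest Cut. I would lay out $d$ diamonds in a horizontal chain, one per coordinate, each with a ``top'' arc (encoding the bit $1$) and a ``bottom'' arc (encoding the bit $0$) of carefully tuned length. Each $u \in U$ is represented by a set $S_u$ of anchor vertices threaded into the chain from the left: the $i$-th anchor of $u$ lies on the top arc of diamond $i$ iff $u_i = 1$ and on the bottom arc otherwise. Each $v \in V$ is threaded symmetrically from the right. Planarity is preserved because every $U$-set stays on one side of the spine and every $V$-set on the other; the two sides interact only through the shared arcs of the diamonds. For the sum-distance version I tune arc lengths so that the shortest distance between the $i$-th $U$-anchor of $u$ and the $i$-th $V$-anchor of $v$ drops exactly when both bits are $1$ (both anchors sit on the same top arc), making $\SumD(S_u,S_v)$ an affine function of $\langle u,v\rangle$; a threshold test on the minimum then decides orthogonality.

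For the max-distance version a single coinciding $1$ must already push $\MaxD(S_u,S_v)$ above the threshold. I would replace each diamond by a two-layer variant in which the ``both-$1$'' arc carries an extra-long detour, and arrange for every anchor-to-anchor shortest path to traverse every diamond exactly once, so that any coordinate with $u_i = v_i = 1$ singlehandedly produces a pair of anchors at distance greater than the threshold. For the hierarchical clustering half of the theorem, I reduce Closest Pair of Sets to simulating Average-Linkage or Complete-Linkage on the shortest-path metric of a planar graph with edge weights in $[O(\log n)]$: attach to each $S_v$ a private star of dummy vertices joined by weight-$1$ edges and give all inter-set paths weight $\Theta(\log n)$. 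Then, starting from singletons, both procedures first swallow each star into its own cluster before performing any inter-star merge; the first inter-star merge is precisely the Closest Pair of Sets (for Average-Linkage after normalizing set sizes to a common value), so a truly-subquadratic simulation would yield a truly-subquadratic Closest Pair of Sets algorithm.

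The main obstacle is the planarity of the OV encoding: a standard OV reduction makes every $u$ interact directly with every $v$, but in a genus-$0$ topology those $n^2$ interactions must factor through a small shared substrate. The diamond chain is the tool that lets them factor through $O(d)$ shared arcs without any crossing between $U$-side and $V$-side vertices, and this is where the construction's low treewidth is exploited. The delicate verification is to rule out shortcut shortest paths that pass through a third vector's anchors and distort the intended distance function; I expect that a polynomial (but bounded) choice of arc lengths, realized by subdividing edges so that $G$ stays unweighted, makes any such shortcut strictly longer than the direct route through the shared arc, so the pairwise distances are exactly the intended affine function of $\langle u, v\rangle$.
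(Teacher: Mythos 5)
Your max-distance plan is in the same spirit as the paper's reduction (a matched coordinate where both bits are $1$ creates the unique far anchor pair), and the clustering half follows the paper's outline (force all intra-set merges to happen first, then read off the first inter-set merge). But there is a genuine gap in the sum-distance half, and it is exactly the obstacle the paper singles out. In any gadget of this kind, the distance between an anchor of $S_u$ and an anchor of $S_v$ is (up to a minimum over a constant number of routes) a sum of two pieces, one controlled by $u$'s bit and one by $v$'s bit; consequently $\SumD(S_u,S_v)$, which sums over all $\Theta(d^2)$ anchor pairs -- including all the off-diagonal pairs $i\neq j$, whose lengths also depend on the bits $u_i$ and $v_j$ -- naturally evaluates to a constant plus functions of $|u|$ and $|v|$ alone, which makes the closest-pair objective trivial and does not encode $\langle u,v\rangle$. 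Your claim that the arc lengths can be tuned so that $\SumD(S_u,S_v)$ is an affine function of $\langle u,v\rangle$ is asserted without a mechanism, and your closing remark about ruling out shortcut paths addresses routing, not this additive contamination of the sum. The paper resolves it by \emph{not} targeting OV for the sum case: one of the two routes gets the complemented bit, so the matched pair contributes $\min(u_i+v_i,\,(1-u_i)+(1-v_i))=|u_i-v_i|$, giving $\SumD(S_a,S_b)=f(d,M)+2\,\HamD(v_a,v_b)$, and the reduction is then from the \emph{monochromatic Hamming Closest Pair} problem, whose subquadratic SETH-hardness is a recent nontrivial result of C.S.\ and Manurangsi that the proof genuinely relies on. Without either invoking that result or supplying a construction that actually cancels the off-diagonal and $|u|,|v|$ terms, the sum-distance statement -- and hence the Average-Linkage statement built on it -- is not established.

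Two smaller points. First, Closest Pair of Sets as defined ranges over \emph{all} pairs of the given sets, so a reduction from bichromatic OV must also rule out same-side pairs dipping below the threshold (a standard padding fixes this, but it is unaddressed, and in your threaded-chain layout two identical $U$-vectors yield sets at very small distance). Second, in the clustering reductions the shared substrate vertices (hubs/junctions, which belong to no set) are eventually absorbed into some clusters, and for Average-Linkage their contribution to the averages must be argued negligible; the paper does this by replicating each set vertex into $k\gg d^2$ copies, whereas ``normalizing set sizes to a common value'' does not by itself control vertices that lie outside every set.
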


\paragraph{Diameter in Distributed Graphs.}
Our final result is on the complexity of diameter in planar graphs in the CONGEST model.
This is the central theoretical model for \emph{distributed computation}, where the input graph defines the communication topology: in each round, each of the $n$ nodes can send an $(\log{n})$-bit message to each one of their neighbors. The complexity of a problem is the worst case number of rounds until all nodes know the answer.

In the CONGEST, a problem is considered tractable if it can be solved in $poly(D,\log{n})$ time, where $D$ is the diameter of the underlying unweighted network\footnote{Note that $\Omega(D)$ rounds are usually required; some nodes cannot exchange any information otherwise.} (i.e. the hop-diameter).
Many basic problems such as finding a Minimum Spanning Tree (MST) and distance computations have been shown to be intractable \cite{PelegR99,Elkin06,NSP11,SarmaHKKNPPW12,FrischknechtHW12,ACK16,CKP17,BK18}.
For example, no algorithm can decide whether the diameter of the network is $3$ or $4$ in $n^{o(1)}$ rounds \cite{FrischknechtHW12}. That is, the Diameter problem itself cannot be solved in $poly( \text{diameter} )$ time.

While (sequential\footnote{This seems to be the standard term for \emph{not distributed} algorithms.}) algorithms for planar graphs have been an extensively studied subject for the past three decades, only recently have they been considered in the distributed setting \cite{Li18,HIZ16a,HIZ16b,HHW18,HLZ18,GP17}.
This study was initiated by Ghaffari and Hauepler \cite{GhaffariH16a,GhaffariH16b} who also demonstrated its potential: While MST has an $\Omega(\sqrt{n})$ lower bound in general graphs~\cite{SarmaHKKNPPW12}, the problem is tractable on planar graphs.
All previous lower bound constructions are far from being planar, and it is natural to wonder: Do \emph{all} problems\footnote{Here, we mean \emph{decision} problems. It is easy to show that problems with a large output such as All-Pairs-Shortest-Paths are not tractable even in trivial networks.} become tractable in the CONGEST when the network is planar?\footnote{Note that even NP-Hard problems might become tractable in this model, since the only measure is the number of rounds, not the computation time at the nodes. For example, in the LOCAL model where we do not restrict the messages to be short, all problems can be solved in $O(D)$ rounds.}

In this paper, we provide a negative answer with a simple argument ruling out any $f(D) \cdot n^{o(1)}$ distributed algorithms even in planar graphs.
A very recent breakthrough of Li and Pater showed that the diameter problem in \emph{unweighted planar graphs} is tractable in the CONGEST \cite{stoc19_parter}.
We show that the \emph{weighted} case is intractable.
Our lower bound is only against exact algorithms which is best-possible since Li and Parter achieve a $(1+\eps)$-approximation in the weighted case with $\tilde{O}(D^6)$ rounds.

\begin{theorem}
\label{thm:diam}
	The number of rounds needed for any protocol to compute the diameter of a weighted planar network of constant hop-diameter $D=O(1)$ on $n$ nodes  in the $CONGEST$ model is $\Omega(\frac{n}{\log{n}})$.

\end{theorem}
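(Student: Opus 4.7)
My plan is to reduce from 2-party Set Disjointness. On inputs of total length $N$, randomized Disjointness requires $\Omega(N)$ bits of communication, and any $R$-round CONGEST protocol on a graph with vertex partition $V = V_A \uplus V_B$ yields a communication protocol using $O(R \cdot C \cdot \log n)$ bits, where $C = |E(V_A, V_B)|$ is the number of cut edges. To match the claimed $\Omega(n/\log n)$ round bound I would encode a Disjointness instance of length $N = \Theta(n \log n)$ (i.e.\ $\Theta(n)$ weights of $O(\log n)$ bits each) into a planar graph with $C = O(\log n)$ cut edges, so that an $R$-round diameter algorithm gives a communication protocol of $O(R \log^{2} n)$ bits and hence $R = \Omega(n / \log n)$.

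The construction proceeds in two layers. First, I would fix the global structure of the instance $G_{x,y}$ so that it is planar, has $\Theta(n)$ vertices, and has unweighted hop-diameter exactly $4$. A simple way is to give Alice and Bob each a hub vertex $h_A$, $h_B$ adjacent to all other vertices on the respective side (two stars), and to join $h_A$ and $h_B$ through a constant-sized, planar central interface of hop-diameter $2$. Since each side is a star, intra-side hop distance is at most $2$, and the interface contributes at most $2$ crossing hops, giving hop-diameter $4$. The second layer is the diamond-like gadget living in the interface: for each coordinate $i$ of the Disjointness input, I would arrange two alternative shortest-path routes between a probe pair $(u_i, v_i)$, one whose length depends on Alice's weight $x_i$ and one whose length depends on Bob's weight $y_i$. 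A careful choice of weights (e.g.\ $0$ versus a large separator constant $M$ on each route, so that $\mathrm{dist}(u_i, v_i) = \min(\cdot, \cdot)$ becomes $M$ precisely when $x_i \wedge y_i = 1$ and strictly less otherwise) makes the weighted diameter of $G_{x,y}$ equal to $M$ iff the inputs are non-disjoint. Crucially, the two alternative routes of \emph{every} diamond are forced to share the same $O(\log n)$ cut edges, so the interface stays small while $\Theta(n)$ probe pairs are encoded in the intra-side weights.

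With $G_{x,y}$ in hand, the reduction is routine: Alice and Bob simulate the hypothetical CONGEST diameter protocol, each handling messages internal to its side and exchanging messages along cut edges over the communication channel at cost $O(C \log n)$ bits per round. Once both sides know the computed diameter, they have solved Disjointness, and the Disjointness lower bound yields the round lower bound. The main obstacle is the planar realisation of the diamond gadget: a naive design with $n$ independent probe pairs forces $\Omega(n)$ cut edges (essentially a $K_{2,n}$ across the interface), destroying the bound; the delicate step is to route all probe pairs through the same small planar interface while keeping the two alternative $u_i$-to-$v_i$ paths faithfully reflective of $x_i$ and $y_i$. A secondary, calibrative task is to ensure that the weighted diameter is actually attained by a matched probe pair $(u_i, v_i)$ rather than by some other pair whose distance could be inflated by large star weights; this is achieved by keeping all intra-side weights strictly below the separator $M$.
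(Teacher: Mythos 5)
Your high-level framework is the same as the paper's (reduce from two-party Disjointness, let Alice and Bob simulate the CONGEST diameter protocol and pay $O(\log n)$ bits per cut edge per round, encode the instance in a diamond gadget where each probe pair has two alternative routes), but two essential pieces are missing, and as written the argument does not deliver the stated bound. First, the accounting: you propose $N=\Theta(n\log n)$ Disjointness bits against a cut of $C=O(\log n)$ edges, yet your gadget, as described, detects one bit-AND $x_i\wedge y_i$ per probe pair, so $\Theta(n)$ probe pairs encode only an $n$-bit Disjointness instance and your own inequality gives $R=\Omega\bigl(n/(C\log n)\bigr)=\Omega(n/\log^2 n)$, which falls short of $\Omega(n/\log n)$. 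Packing $O(\log n)$ bits into a single weight does not rescue this: shortest-path lengths add and take minima, and nothing in the construction lets the diameter test whether two packed $O(\log n)$-bit weights intersect bitwise. The paper instead makes the cut \emph{constant} --- exactly the two edges $(\ell,\ell')$ and $(r,r')$ --- so that a plain $n$-bit Disjointness instance already yields $\Omega(n/\log n)$; committing to $C=O(1)$ rather than $O(\log n)$ is what makes the numbers work.

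Second, and more fundamentally, the gadget itself is unspecified at exactly the point where the difficulty lies. Since all $\Theta(n)$ probe pairs must share the same constant-size interface, every $u_i$ can reach every $v_j$ through it; with route weights of the form ``$0$ versus a large constant $M$'' a mismatched pair $(u_i,v_j)$, $i\ne j$, with $x_i=y_j=1$ would also drive the diameter up, producing false positives, and your proposed calibration (keeping intra-side weights strictly below $M$) does not address this --- it addresses a different and easier issue; the hub stars you add with small intra-side weights in fact risk short-cutting all cross distances and flattening the gadget entirely. The paper's construction is precisely a solution to this problem: the weights are scaled by the coordinate index, e.g.\ $w(a_i,\ell)=i\cdot M+A[i]$ and $w(a_i,r)=(n{+}1{-}i)\cdot M+A[i]$ (mirrored on Bob's side), so that for $i\ne j$ one of the two routes is cheaper by at least $M$, capping $d(a_i,b_j)$ at $(n+1)M+2$, while for matched pairs both routes are tied at $(n+2)M+A[i]+B[i]$, making the diameter $(n+2)M+2$ iff the strings intersect. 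You explicitly flag this as ``the delicate step'' and leave it open, but that step is the entire content of the lower bound, not a secondary calibration; without it (and without the constant-size cut) the proposal does not constitute a proof of the theorem.
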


Our technique for showing lower bounds in the CONGEST model is by reduction from two-party communication complexity and is similar to the one in previous works.
For general graphs, there are strong $\Omega(n/\log^{2}n)$ lower bounds for computing the diameter even in unweighted, sparse graphs of constant diameter \cite{FrischknechtHW12,ACK16,BK18}.
Our high-level approach is similar, but a substantially different implementation is needed in order to keep the graph planar.
In fact, we design a simple but subtle, diamond-like gadget for this purpose (see Section~\ref{sec:diam}).
The other lower bounds in the paper were obtained by building on top of this simple construction and they show that this gadget may really be capturing the difficulty in many planar graph problems.
In particular, the lower bounds for closest pair of sets, which are the most complicated in this paper, are achieved by combining $O(n)$ copies of this gadget together in an ``outer construction'' that also has the same structure of this gadget.

\section{Sparsest Cut, Minimum Quotient Cut and Minimum Bisection}
\label{sec:def:sc}
We now provide formal definitions of the Sparsest Cut, Minimum Quotient
Cut and Minimum Bisection problems.
Consider a planar graph $G=(V,E)$ with edge costs $c : E \mapsto \R^+$
and vertex weights $w : V \mapsto \R^+$. Given a subset of vertices $S$,
we define the \emph{cut induced by $S$} as the set of edges with one extremity
in $S$ and the other in $V-S$. We will slightly abuse notation by
referring to the cut induced by $S$ as the cut of $S$.
We let $$\text{cost}(S) = \sum_{(u,v) \in E\atop u \in S, v \notin S} c(u,v)$$
and, with a slight abuse of notation, $w(S) = \sum_{v \in S} w(v)$.
Given a subset of vertices $S$, we define
the \emph{sparsity of the cut induced by $S$} as the ratio
$\cut(S)/(w(S) \cdot w(V-S))$. 
The \emph{Sparsest Cut problem}
asks for a subset $S$ of $V$ that has minimum sparsity over all cuts induced
by a subset $S \subset V$.
This is not to be confused with the General
Sparsest Cut  which is APX-Hard in planar graphs\footnote{There, there is a weighted demand between
pair of vertices and the goal is to find a subset $S$ such that the cut induced
by $S$ minimizes the ratio of $\text{cut}(S)$ to the amount of demand between
pairs of vertices in $S$ and $V-S$.}.

The \emph{quotient} of a cut $S$ is defined to be $\cut(S)/(\min\{w(S),w(V-S)\})$.
The \emph{Minimum Quotient Cut}
problem asks for a cut with minimum quotient.  The \emph{Minimum Bisection problem}
asks for a subset $S$ such that $w(S) = w(V)/2$ and that minimizes
$\cut(S)$. 


\subsection{Proof of Theorem~\ref{thm:sparsestcut}: Lower bounds}


In this section, we aim prove a conditional lower bound of $\Omega(n^{2-\eps})$ for the
unit vertex-weight case of all three problems: the Sparsest Cut, the Minimum Quotient Cut,
and the Minimum Bisection problems.
We will first provide a reduction for the case of non-unit
vertex-weight and then show how to
adapt it to the unit vertex-weight case.

Our lower bounds are based on the hardness of the $(min,+)$-Convolution Problem, defined as follows, which is conjectured to require $\Omega(n^{2-\eps}$ time, for all $\eps>0$.

\begin{definition}[The $(min,+)$-Convolution Problem]
Given two sequences of integers $A = \langle a_1,\ldots,a_n\rangle$,
$B = \langle b_1,\ldots,b_n \rangle$, the output is a sequence $C = \langle c_1,c_2,\ldots,c_n \rangle$
such that $c_k = \min_{0 \le i \le k}  a_{k-i+1} + b_i$.
\end{definition}

To prove our conditional lower bounds we will show reductions from the following variant called $(min,+)$-Convolution Upper Bound, which was shown to be \emph{subquadratic-equivalent} to $(min,+)$-Convolution by Cygan et al.~\cite{DBLP:conf/icalp/CyganMWW17}.
Namely, there is an
$O(n^{2-\eps})$ algorithm for some constant $\eps>0$
for the $(min,+)$-Convolution Upper Bound problem
if and only if there is an $O(n^{2-\eps'})$ for the $(min,+)$-Convolution problem, for some $\eps'>0$.

\begin{definition}[The $(min,+)$-Convolution Upper Bound Problem]
Given three sequences of positive integers $A = \langle a_1,a_2,\ldots,a_n\rangle$,
$B = \langle b_1,b_2,\ldots,b_n \rangle$, and
$C = \langle c_1,c_2,\ldots,c_n \rangle$, verify that for all $k \in [n]$, there is no pair $i,j$
such that $i+j = k$ and
$a_i + b_j < c_k$.
\end{definition}

\paragraph{The Reduction.}
The construction in each of our three reductions is the same, and the analysis is a little different in each case. Therefore, we will present all three in parallel. 
Given an instance $A,B,C$ of the $(min,+)$-Convolution Upper Bound problem, we build
an instance $G$ for the Sparsest Cut, Minimum Quotient Cut or
Minimum Bisection problems as follows.

Let $T =  \sum_{i=1}^n a_i + b_i +c_i$ and $\beta = 4Tn^2$.
The graph $G$ will have two special vertices $u$ and $v$ of weights $10n$ and $11n$ respectively.
It will also have three paths $P_A,P_B,P_C$ that connect $u$ and $v$ and will encode the three sequences as follows.

\begin{itemize}
\item The path $P_A$ has a vertex $v_{a_i}$ for
each $a_i$ in $A$, of weight $1$.  We connect $v_{a_i}$ to $v_{a_{i+1}}$
with an edge
of cost $\beta + a_i$ for each $1 \le i < n$. Moreover, we connect $v_{a_n}$ to $u$ with an edge
of cost $\beta + a_n$ and $v$ to $v_{a_1}$ with an edge $e_A$ of cost $1210 n^2 (2\beta+ T)$.

\item The path $P_B$ is defined in an analogous way.
We create a vertex $v_{b_i}$ of weight
$1$ for each $b_i$ in $B$ and connect it with an edge of cost $\beta+b_i$ to $v_{b_{i+1}}$
for each $1 \le i < n$, and we connect $v_{b_n}$ to $u$ with an edge of cost
$\beta + b_n$ and $v$ to $v_{b_1}$ with an edge $e_B$ of cost $1210 n^2 (2\beta+ T)$.

\item The path $P_C$ is defined differently: the indices are ordered in the opposite direction and the numbers are flipped. We create
a vertex $v_{c_i}$ of weight $1$ for each $c_i$ in $C$ but connect
each $v_{c_i}$ to $v_{c_{i+1}}$ with an edge of cost $\beta+T- c_i $ for each
$1 \le i < n$. And this time we connect $v_{c_1}$ to $u$ with an edge $e_C$ of cost $1210 n^2 (2\beta+ T)$ and $v_{c_n}$ to $v$ with an edge of cost $\beta+c_n$.

\end{itemize}

It is easy to see that the resulting graph is planar and has
treewidth at most 3. See also Figure~\ref{fig:LB}.
The total weight in our construction is $W=24n$ because there are $3n$ vertices of weight $1$ and the two special vertices $u,v$ have weight $21n$.

\begin{figure}
  \centering

  \def\svgwidth{7cm}
  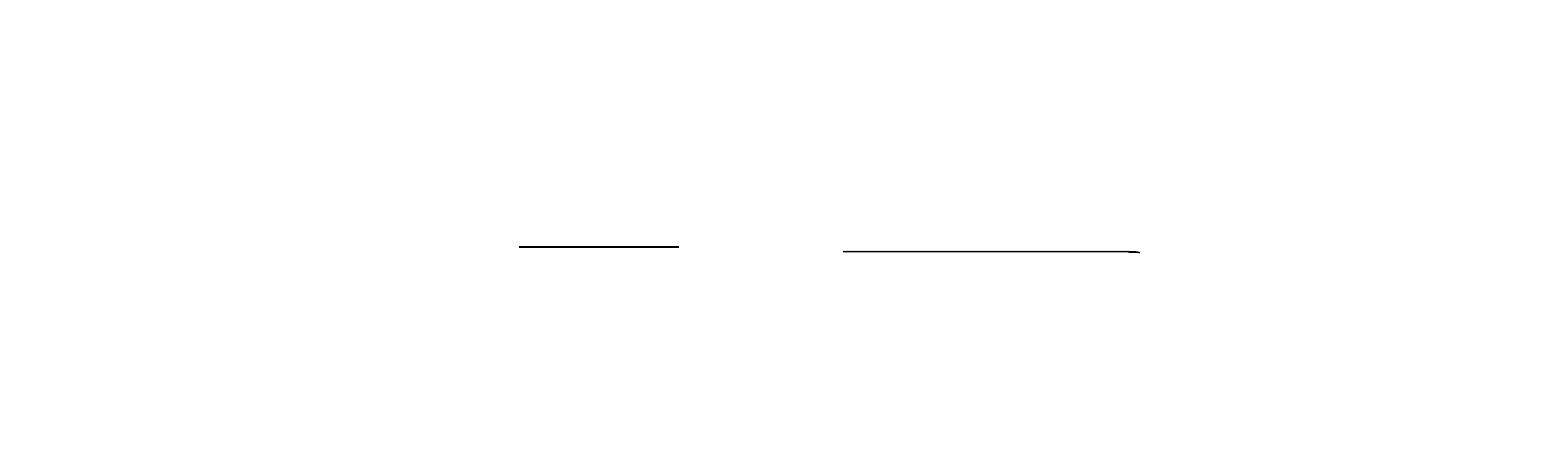
  \caption{The graph generated in our reduction. Dashed edges
  are edges of weight $1210 n^2 (2\beta+ T)$.}
  \label{fig:LB}
\end{figure}

\paragraph{Correctness of the Reductions.}
To analyze the reduction, we start by proving two lemmas about the structure of the optimal solution in each of the three problems in the instances we generate.
To build intuition, observe that in our construction any cut that does not separate $u$ and $v$ is far from being balanced and therefore will not be an optimal solution.
Another observation is that the weights of the edges $\{e_A,e_B,e_C\}$ is practically infinite and therefore they will not be cut by an optimal cut.

\begin{lemma}
  \label{lem:structLB:sparsest}
  The Minimum Quotient Cut, the Sparsest Cut, and the Minimum Bisection
  cut intersect each of $P_A,P_B,P_C$ exactly once and do not intersect
  any edge of $\{e_A,e_B,e_C\}$.
\end{lemma}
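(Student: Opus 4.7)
The plan is to prove all three claims in parallel, since the graph $G$ does not depend on the problem. First I would exhibit a ``canonical'' cut $S^\star$ that has the claimed structure, upper-bound its objective, and then show that any other cut structure pays strictly more on each of the three objectives. Take $S^\star$ to consist of $u$ together with $k_A+k_B+k_C=2n$ vertices drawn from the initial segments of $P_A,P_B,P_C$ adjacent to $u$. Then $S^\star$ cuts exactly one edge on each path and none of $\{e_A,e_B,e_C\}$, we have $w(S^\star)=10n+2n=12n=W/2$, and since each path edge has cost in $[\beta+1,\beta+T]$ and $\beta=4Tn^2\geq 4T$, we get $\cut(S^\star)\leq 3\beta+3T\leq 4\beta$. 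Thus $\mathrm{OPT}_{\mathrm{MQC}}\leq 4\beta/(12n)$, $\mathrm{OPT}_{\mathrm{SC}}\leq 4\beta/(144n^2)$, and $\mathrm{OPT}_{\mathrm{MB}}\leq 4\beta$.

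Next I would rule out, for each of the three problems, every cut not of the claimed form. Partition such cuts into three cases. \emph{Case A}: $S$ contains some edge of $\{e_A,e_B,e_C\}$. Then $\cut(S)\geq 1210\,n^2(2\beta+T)$, which---compared against the denominator bounds $\min\{w(S),w(V-S)\}\leq 12n$ for MQC, $w(S)\cdot w(V-S)\leq 144n^2$ for SC, or trivial for MB---exceeds Step~1's upper bound by a factor of order $n^2$. \emph{Case B}: $S$ cuts no huge edge. Each $P_X$ minus its huge edge is a $u$-$v$ path, so $u,v$ lie on the same side of $S$ iff every $P_X$ is cut an even number of times. \emph{Case B1} ($u,v$ on the same side): the two special vertices together weigh $21n>W/2$, so the minority side has weight $\leq 3n$; since $G$ is connected some path must be cut, parity forces that path to be cut $\geq 2$ times, and $\cut(S)\geq 2\beta$ then gives MQC quotient $\geq 2\beta/(3n)>4\beta/(12n)$ and SC sparsity $\geq 2\beta/(3n\cdot 21n)=2\beta/(63n^2)>4\beta/(144n^2)$; for MB the constraint $w(S)=12n$ is violated outright (either $S$ or $V-S$ has weight $\geq 21n$), so B1 is vacuous. \emph{Case B2} ($u,v$ separated but the structure fails): some path is cut $\geq 3$ times, so by parity the total number of cut path-edges is $\geq 5$, giving $\cut(S)\geq 5\beta$; since the denominators are no larger than those attained by $S^\star$, each objective exceeds the canonical one by at least a factor $5/4$.

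The main obstacle is the constant bookkeeping in Case B, not any deep structural insight. The scaling $\beta=4Tn^2$ is calibrated precisely so that the $O(T)$ additive perturbations on each path edge are swallowed by $\beta$ even after rescaling by $1/n$ (for MQC) or $1/n^2$ (for SC); the asymmetric weights $10n$ and $11n$ of $u,v$ are chosen so that $w(u)+w(v)=21n$ strictly exceeds $W/2$, which both caps the minority-side weight at $3n$ in B1 and forces any bisection to separate $u$ from $v$; and the loose coefficient $1210\,n^2$ on the huge edges makes Case~A immediate. Once these constants are in place the case analysis is arithmetic.
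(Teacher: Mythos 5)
Your proof is correct, but it reaches the ``exactly once'' conclusion by a genuinely different route than the paper. The first two steps coincide (same numerology): a cheap balanced cut of cost at most $3\beta+O(T)$ certifies that any cut through a huge edge, of cost $1210n^2(2\beta+T)$, is far from optimal for all three objectives, and that the optimal MQC/Sparsest/Bisection cut must separate $u$ from $v$ (the paper's version of your Case B1 uses the same ``minority side weighs at most $3n$'' bound). The divergence is in the last step: the paper does \emph{not} argue by counting for MQC and Sparsest Cut; it invokes Theorem 2.2 of Park--Phillips and Proposition 2.3 of Patel, stating that optimal quotient/sparsest cuts are simple cycles in the planar dual, so cutting one path twice in addition to the other two paths would yield a non-simple dual cycle and hence a non-optimal cut (only for Minimum Bisection does the paper use a cost comparison, $\geq 4\beta$ versus $\leq 3\beta+2T$). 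Your Case B2 replaces this citation with a parity-plus-exchange argument: with no huge edge cut, separating $u$ from $v$ forces an odd number of cut edges on each $u$--$v$ path, so a violation costs at least $5\beta$, while the denominators $\min\{w(S),w(V-S)\}\leq 12n$ and $w(S)w(V-S)\leq 144n^2$ are already maximized by the canonical cut; this makes the proof self-contained, uniform across the three problems, and independent of planarity or the dual-cycle structure, whereas the paper's appeal to the dual characterization avoids the parity bookkeeping and reuses machinery it needs anyway (e.g.\ in Lemma~\ref{lem:LB:cut} and the algorithms). Two cosmetic slips to fix: the sentence ``each $P_X$ minus its huge edge is a $u$-$v$ path'' is not literally true ($e_A,e_B$ are incident to $v$ and $e_C$ to $u$); the correct statement is that $P_X$ \emph{including} its huge edge is a $u$-$v$ path, and since in Case B no huge edge is cut, the parity of cut edges along $P_X$ is carried entirely by its normal edges. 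Likewise, when defining $S^\star$ you should note the segments must be chosen with $1\leq k_C$ and $k_A,k_B\leq n-1$ so that the cut edge on each path is a normal edge (e.g.\ the paper's choice $(v_{a_1},v_{a_2})$, $(v_{b_1},v_{b_2})$, $(v_{c_2},v_{c_3})$); with that choice your bounds go through as written.
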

\begin{proof}
  We start with the Minimum Bisection, which is the simplest case since the cut is forced to have exactly $W/2$ weight on each side. By picking
  edges $(v_{a_{1}},v_{a_{2}})$, $(v_{b_{1}},v_{b_{2}})$,
  and $(v_{c_{2}},v_{c_{3}})$,
  we indeed obtain a cut that breaks the graph into two connected components
  of the same weight. The value of the cut is then at most
  $\beta+a_1 + \beta+b_1 + \beta + T - c_2 \le 3\beta+2T$.
  However, any cut intersecting $\{e_A,e_B,e_C\}$ has cost at least
  $1210n^2 (2\beta + T)$ and so 
  the (optimal) Minimum Bisection does not intersect $\{e_A,e_B,e_C\}$.
  Moreover, it is easy to see that the Minimum Bisection Cut must
  separate $u$ from $v$ as otherwise, the cut is not balanced.
  Thus the Minimum Bisection intersects
  each of $P_A,P_B,P_C$ at least once. Finally, suppose it intersects
  them more than once. The cost is thus at least
  $4 \beta$, while by picking
  edges $(v_{a_{1}},v_{a_{2}})$, $(v_{b_{1}},v_{b_{2}})$,
  and $(v_{c_{2}},v_{c_{3}})$, the cost achieved is at most $2T+3\beta$.
  By the choice of $\beta$, we have $2T+3\beta < 4\beta$ and so the
  Minimum Bisection intersects each of $P_A,P_B,P_C$ exactly once.
  
  We then argue that the Minimum Quotient cut $Q$ and the
  Sparsest Cut $S$ do not intersect any
  edge of $\{e_A,e_B,e_C\}$. Indeed, any cut $U$ that intersect
  an edge $\{e_A,e_B,e_C\}$
  has cost at least $1210 n^2 (2\beta+ T)$ and so induces
  a Quotient Cut of value at least
  $1210 n^2 (2\beta+ T)/(12n)   = 110 n (2\beta+T)$ and a cut of
  Sparsity at least
  $1210 n^2 (2\beta+ T)/(12n)^2 = 10 (2\beta+T)$.
  Now, consider the cut separating $a_1$ from the rest of the graph.
  This cut has cost at most $2\beta + T$.
  Thus, it forms a quotient cut of value at most $2\beta+ T$ and a
  cut of sparsity at most
  $(2\beta+T)/12n$. This induces a cut that is both of
  smaller sparsity and of smaller
  quotient value than any cut
  involving any of $\{e_A,e_B,e_C\}$.
  It follows that $Q$ and $S$ do not intersect $\{e_A,e_B,e_C\}$.

  We now show that both $Q$ and $S$ separate $u$ from $v$.
  Consider a cut $U$ that has both $u$ and $v$ on one side. This cut needs to contain
  at least two edges and so has cost at least $2(\beta + 1)$.
  It thus induces a quotient cut of value at least $2(\beta+1)/3n$ and a cut of
  sparsity at least $2(\beta+1)/(22n \cdot 3n)$.
  On the other hand, consider a cut $Y$ obtained by picking an edge from each of $P_A, P_B, P_C$.
  The cost of this cut is at most $3\beta+2T$, which induces a quotient cut
  of value at most $(3\beta+2T)/(10n)$ and of sparsity $(3\beta+2T)/(10n)^2$.
  Since $\beta = 4Tn^2$, we have that
  $(3\beta+2T)/(10n) < (2\beta+1)/(3n)$ and
  $(3\beta+2T)/(10n)^2 < 2(\beta+1)/(20n \cdot 3n)$, as long as $n > 1$. Therefore, 
  $Q$ and $S$ separate $u$ from $v$ and so intersect
  at least one edge from each of $P_A,P_B,P_C$.

  Finally, by Theorem 2.2 in~\cite{ParkPhillips} and Proposition 2.3 in~\cite{patel13},
  we have that the minimum quotient cut and the sparsest cut
  are simple cycles in the dual of the graph. Picking two edges of $P_A$ (or of $P_B$, or $P_C$)
  together with at least one edge of $P_B$ and of $P_C$ would induce a non-simple cycle in the dual
  of the graph and so a non-optimal cut. Therefore, we conclude that 
  the minimum quotient cut and sparsest cut uses exactly one edge of $P_A$, one edge of $P_B$,
  and one edge of $P_C$.
\end{proof}

\begin{lemma}
  \label{lem:LB:cut}
  If the Minimum Quotient Cut, the Sparsest Cut, or the
  Minimum Bisection intersects edges
  $(v_{a_i},v_{a_{i+1}})$,
  $(v_{b_j},v_{b_{j+1}})$ and $(v_{c_k},v_{c_{k+1}})$, then $v$ and the vertices in
  $\{v_{a_1},\ldots, v_{a_i}\}$, $\{v_{b_1},\ldots,v_{b_j}\}$, and
  $\{v_{c_{k+1}},\ldots,v_{c_n}\}$, are on one side of the cut while
  the remaining vertices are on the other side.
\end{lemma}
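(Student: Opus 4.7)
The plan is to bootstrap entirely from Lemma \ref{lem:structLB:sparsest}: once we know that the optimal cut (in any of the three problems) uses exactly one edge of each of $P_A, P_B, P_C$ and avoids $e_A, e_B, e_C$ entirely, the claim becomes a pure connectivity argument about what remains after deleting the three cut edges from $G$.

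First I would record that the optimal cut separates $u$ from $v$ (already shown in the previous lemma), so $u$ and $v$ lie in different connected components of $G$ minus the three cut edges. Next I would trace the $v$-side component along each path. Along $P_A$, the edge $e_A = (v, v_{a_1})$ is uncut, so $v$ is connected to $v_{a_1}$; then the edges $(v_{a_\ell}, v_{a_{\ell+1}})$ for $\ell < i$ are all uncut, so $v_{a_1}, \ldots, v_{a_i}$ all lie in the same component as $v$. Since $(v_{a_i}, v_{a_{i+1}})$ is cut, no further $v_{a_\ell}$ is reached this way. The same argument applied to $P_B$, using that $e_B$ is uncut, places $v_{b_1}, \ldots, v_{b_j}$ on the $v$-side. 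For $P_C$ the indexing is reversed: the edge from $v_{c_n}$ to $v$ has finite cost $\beta + c_n$ and is uncut (the unique cut edge of $P_C$ is $(v_{c_k}, v_{c_{k+1}})$), so following $P_C$ inward from $v$ places $v_{c_n}, v_{c_{n-1}}, \ldots, v_{c_{k+1}}$ on the $v$-side.

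Symmetrically, tracing from $u$ along the uncut portions of the three paths (using that $e_C$ is uncut on $P_C$, and using the edges $(v_{a_n}, u)$ and $(v_{b_n}, u)$ on $P_A, P_B$) shows that $v_{a_{i+1}}, \ldots, v_{a_n}$, $v_{b_{j+1}}, \ldots, v_{b_n}$, and $v_{c_1}, \ldots, v_{c_k}$ all lie in the $u$-side component. Because the three paths together cover every vertex of $G$, this exhausts the vertex set and proves the partition claimed in the lemma.

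There is no real obstacle; the only care needed is the boundary cases where $i = n$, $j = n$, or $k \in \{0, n\}$ (i.e.\ when the cut edge is one of $(v_{a_n}, u)$, $(v_{b_n}, u)$, or $(v_{c_n}, v)$, $(u, v_{c_1})$-but note $(u,v_{c_1}) = e_C$ is excluded). These correspond to one of the advertised sets being empty, and the same connectivity trace still works verbatim with the appropriate set interpreted as $\varnothing$. It is also worth remarking that the argument is independent of which of the three problems produced the cut: Lemma \ref{lem:structLB:sparsest} already reduced all three cases to the same structural guarantee, so the connectivity trace above gives the conclusion uniformly.
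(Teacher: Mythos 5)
Your proposal is correct and follows essentially the same route as the paper: invoke Lemma~\ref{lem:structLB:sparsest} to get exactly one cut edge per path (and none of $e_A,e_B,e_C$), then trace connectivity from $v$ and from $u$ along the uncut portions of $P_A,P_B,P_C$ to identify the two sides. Your extra care with the reversed indexing on $P_C$ and the boundary cases only makes explicit what the paper leaves implicit.
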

\begin{proof}
  By Lemma~\ref{lem:structLB:sparsest},
  the Minimum Quotient Cut, the Sparsest Cut and the Minimum Bisection
  intersect each of $P_A,P_B,P_C$ exactly once. Thus, if
  one of them intersect edges
  $(v_{a_i},v_{a_{i+1}})$, $(v_{b_j},v_{b_{j+1}})$ and $(v_{c_k},v_{c_{k+1}})$, then
  $v_{a_i}$ remains connected to $v$ through the path
  $\{v_{a_1},\ldots, v_{a_i}\}$ and
  so all the vertices in $\{v, v_{a_1},\ldots, v_{a_i}\}$ are in the
  same connected
  component. The remaining vertices of $P_A$ remains connected to $u$.
  A similar reasoning applies to $v_b$ and $v_c$ and yields the lemma.
\end{proof}

From these two lemmas it follows that the only way that an optimal cut can be completely balanced (i.e. has weight $W/2=12n$ on each side) is by cutting three edges $(v_{a_i},v_{a_{i+1}})$,
  $(v_{b_j},v_{b_{j+1}})$ and $(v_{c_k},v_{c_{k+1}})$, where $i+j=k$. 
  This is the crucial property of our construction. To see why it is true, note that $i+j+(n-k)$ vertices go to the side of $v$ while $(n-i)+(n-j)+k$ vertices go to the side of $u$, and so to achieve balance it must be that:
  $$
  i+j-k + n + w(v) =  k - i - j + 2n + w(u) 
  $$
  which simplifies to $ i+j =k  $ because of our choice of $w(u)=10n$ and $w(v)=11n$.
  Moreover, the cost of this cut is exactly $(3\beta + T) + (a_i + b_j - c_k)$ which is less than $(3\beta+T)$ if and only if $a_i+b_j < c_k$.
The correctness of the reductions follows from the following claim.

\begin{claim}
There is no $k \in [n]$ and a pair $i,j$ such that $i+j=k$ and $a_i+b_j < c_k$, if and only if either of the following statements is true:
\begin{itemize}
\item the Minimum Quotient Cut has value at least $(3\beta+T)/12n$,
\item the Sparsest Cut has value at least $(3\beta+T)/(12n^2)$, or
\item the Minimum Bisection has value at least $(3\beta+T)$.
\end{itemize}
\end{claim}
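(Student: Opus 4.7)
The strategy is to combine the two preceding structural lemmas with a direct cost computation on the three-path construction. By Lemmas~\ref{lem:structLB:sparsest} and~\ref{lem:LB:cut}, any optimal cut (for each of the three problems) is determined by a triple $(i,j,k)$ specifying which edge of each of $P_A,P_B,P_C$ it crosses. First I would write down the two key quantities for such a cut: its cost is
\[
   3\beta + T + (a_i + b_j - c_k),
\]
and, using the vertex weights $w(u)=10n$ and $w(v)=11n$ together with the partition described in Lemma~\ref{lem:LB:cut}, the $v$-side has weight $12n+(i+j-k)$ and the $u$-side has weight $12n-(i+j-k)$. In particular the cut is a bisection if and only if $i+j=k$.

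\paragraph{The $(\Leftarrow)$ direction (negating each of the three bounds).}
Suppose there is a triple $(i,j,k)$ with $i+j=k$ and $a_i+b_j<c_k$. Then the cut crossing the corresponding three edges is a bisection of cost strictly less than $3\beta+T$, and its two sides each have weight exactly $12n$. This immediately yields Minimum Bisection $<3\beta+T$, Minimum Quotient Cut $<(3\beta+T)/(12n)$, and Sparsest Cut $<(3\beta+T)/(12n)^2$. Hence if any of the three displayed inequalities holds, no such bad triple can exist.

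\paragraph{The $(\Rightarrow)$ direction (each bound holds when no bad triple exists).}
Now assume that for every $k$ and every $i,j$ with $i+j=k$ we have $a_i+b_j\ge c_k$. Fix an optimal cut, parameterized by $(i,j,k)$, and split into two cases. If $i+j=k$, then the cut is balanced, so its cost is at least $3\beta+T$ by hypothesis and each of the three targeted inequalities is immediate (for the bisection bound this is the only relevant case, since any non-balanced cut is not a bisection). If $i+j\ne k$, let $d=|i+j-k|\ge 1$; the cost is at least $3\beta+T-c_k\ge 3\beta$, the quotient denominator is $12n-d$, and the sparsity denominator is $(12n-d)(12n+d)=144n^2-d^2$. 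The goal is to verify
\[
   \frac{3\beta+T+(a_i+b_j-c_k)}{12n-d}\;\ge\;\frac{3\beta+T}{12n}
\]
and the analogous inequality for sparsity. Cross-multiplying the quotient inequality and using $c_k\le T$ reduces it to $d(3\beta+T)\ge (c_k-a_i-b_j)\cdot 12n$, which follows from $d\ge 1$ and the choice $\beta=4Tn^2$ (so $3\beta\ge 12Tn$). The sparsity case reduces similarly, with the extra factor of $12n+d$ in the denominator only helping.

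\paragraph{Main obstacle.}
The content of the proof is concentrated in the non-balanced case of the $(\Rightarrow)$ direction: one has to show that the up-to-$T$ reduction in cost coming from moving to an unbalanced triple $(i,j,k)$ can never compensate for the decrease of the denominator by $d\ge 1$ (or of the product denominator by up to $d\cdot 24n$ in the sparsity version). This is exactly where the huge multiplicative gap $\beta=4Tn^2\gg T$ is used; once this is recorded, the three inequalities all reduce to routine algebra using the identities for cost and side weights from the opening paragraph.
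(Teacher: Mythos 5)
Your route is the same as the paper's: invoke Lemmas~\ref{lem:structLB:sparsest} and~\ref{lem:LB:cut}, record that a cut crossing at $(i,j,k)$ has cost $3\beta+T+(a_i+b_j-c_k)$ and side weights $12n\pm(i+j-k)$, and then split into the balanced case ($i+j=k$) and the unbalanced case. Your treatment of Minimum Bisection and of Minimum Quotient Cut is correct and matches the paper; in particular the cross-multiplied condition $d(3\beta+T)\ge 12n\,(c_k-a_i-b_j)$ does follow from $c_k\le T$ and $3\beta=12Tn^2$.

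The gap is exactly where you wave your hands: ``the sparsity case reduces similarly, with the extra factor of $12n+d$ in the denominator only helping'' is backwards. Cross-multiplying $\frac{3\beta+T+(a_i+b_j-c_k)}{(12n)^2-d^2}\ge\frac{3\beta+T}{(12n)^2}$ gives $d^2(3\beta+T)\ge (12n)^2\,(c_k-a_i-b_j)$, so you now have to dominate roughly $144n^2T$ rather than $12nT$, and with $d=1$ and $3\beta=12Tn^2$ this can genuinely fail: take some $c_k$ close to $T$ with $a_i=b_j=1$ for a pair with $i+j=k-1$, while keeping $a_{i'}+b_{j'}\ge c_{k'}$ for every exact triple (e.g.\ $a=(1,C,C,1,1)$, $b=(1,1,C,1,1)$, $c=(1,1,1,C,1)$); then the unbalanced cut at $(1,2,4)$ has sparsity strictly below $(3\beta+T)/(12n)^2$ even though no bad triple exists. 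So the unbalanced sparsity case does not follow from $\beta=4Tn^2$; the argument needs $\beta$ larger by a constant factor (anything like $\beta\ge 48Tn^2$, or $4Tn^3$, works and affects nothing else). To be fair, the paper's own proof fumbles the very same step --- it asserts $3\beta/((12n)^2-1)>10T$, which is false for the stated $\beta$ (the left side is about $T/12$), and then falls back on the quotient inequality --- so your proposal reproduces the paper's argument including its weak point; a careful write-up must either enlarge $\beta$ or give a genuinely separate argument for the product denominator, rather than claim the sparsity case is easier than the quotient case.
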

%
\begin{proof}
Consider first the Minimum Bisection. By
Lemma~\ref{lem:structLB:sparsest}, the Minimum Bisection intersects
each of $P_A,P_B,P_C$ exactly once. Thus, combined with
Lemma~\ref{lem:LB:cut}, we have that the if the Minimum Bisection
intersects an edge $(v_{c_k},v_{c_{k+1}})$ for some $k$, then it must 
intersect $(v_{a_i},v_{a_{i+1}})$, and $(v_{b_j},v_{b_{j+1}})$ such
that $j+i = k$ to achieve balance. Therefore, the cut has value
$3\beta + a_i + b_{k-i} + T-c_k$ which is at least $3\beta+T$
if and only if there is no
$i,j$ such that $i+j = k$ and $a_i+b_j < c_k$.

We now turn to the cases of Minimum Quotient Cut and Sparsest Cut.
For the first direction, assume that there is a triple $i,j,k$ where $k=i+j$ such that $a_i+b_j < c_k$.
In this case, we have
a cut of quotient value less than $(3\beta+T)/(12n)$ and a cut of sparsity less than
$(3\beta+T)/(12n)^2$ obtained by taking edges $(v_{a_i},v_{a_{i+1}})$,
$(v_{b_j},v_{b_{j+1}})$ and $(v_{c_k},v_{c_{k+1}})$.

For the other direction, let us first focus on the Minimum Quotient
Cut $Q$.
By Lemma~\ref{lem:structLB:sparsest},
$Q$ contains one edge from each of $P_A,P_B,P_C$ say $(v_{a_i},v_{a_{i+1}})$,
$(v_{b_j},v_{b_{j+1}})$ and $(v_{c_k},v_{c_{k+1}})$.
First, if $i+j \neq k$, by Lemma~\ref{lem:LB:cut}, we have that
the cut has quotient value at least $(3\beta + a_i + b_j + T-c_k)/(12n -1)$
which is at least $(3\beta + 1)/(12n -1)$.
By the choice of $\beta$, we have
that $3\beta/12n > 10T$ and so, $(3\beta + 1)/(12n -1) \ge
(3\beta + T)/12n$.

Thus, we may assume that $i+j = k$.
By Lemma~\ref{lem:LB:cut}, we hence have that the quotient value of the cut
is less than $(3\beta+T)/(12n)$ if and only if $a_i + b_j < c_k$.
This follows from the fact that 
the quotient value of the cut is $(3\beta+a_i + b_i + T-c_k)/(12n)$
which is less than $ (3\beta + T)/(12n)$
if and only if $a_i + b_j < c_k$.



The argument for the Sparsest Cut is similar. Again, by
Lemma~\ref{lem:LB:cut},
the sparsest cut contains one edge from each of $P_A,P_B,P_C$,
say $(v_{a_i},v_{a_{i+1}})$, $(v_{b_j},v_{b_{j+1}})$ and $(v_{c_k},v_{c_{k+1}})$.
Similarly, if $i+j = k$, we have that the sparsity of the cut
is less than $(3\beta+T)/(12n)^2$ if and only if $a_i + b_j < c_k$,
since the sparsity of the cut is $(3\beta+a_i + b_i + T-c_k)/(12n)^2$.

Finally, if $i+j \neq k$ then the sparsity of the cut is at least
$(3\beta + a_i + b_j + T-c_k)/((12n -1)(12n+1))$ which
is at least $(3\beta + 1)/((12n -1)(12n+1))$.
By the choice of $\beta$, we have that
$3\beta/((12n)^2-1) > 10T$ and so, $(3\beta + 1)/(12n -1) \ge
(3\beta + T)/12n$.
\end{proof}

\paragraph{A Unit-Vertex-Weight Reduction}
Intuitively, we are able to remove the weights because the total weight $W$ is $O(n)$.
To show this more precisely, we note that the above reduction makes use of vertices of weight $1$, except for $u$ and $v$ which
are of weight $10n$ and weight $11n$ respectively. Now, place a weight of $1$ on $u$ and $v$ and
add vertices $u^1,\ldots,u^{10n-1}$ and connect them with edges of length $1210 n^2 (2\beta+T)$
to $u$ and add vertices $v^1,\ldots,v^{11n-1}$ and connect them with edges of length
$1210 n^2 (2\beta+T)$ to $v$. For the same argument used in Lemma~\ref{lem:structLB:sparsest},
the Minimum Quotient cut, the Sparsest Cut, and the Minimum Bisection do not intersect any of these edges and so
the above proof can be applied unchanged.

\section{Lower Bound for Diameter in CONGEST}
\label{sec:diam}

In this section we prove Theorem~\ref{thm:diam} and present the simple gadget that is at the core of our lower bounds.

\begin{proof}[Proof of Theorem~\ref{thm:diam}]

The proof is by reduction from the two-party communication complexity of Disjointness: There are two players, Alice and Bob, each has a private string of $n$ bits, $A,B \in \{0,1\}^n$ and their goal is to determine whether the strings are disjoint, i.e. for all $i \in [n]$ either $A[i]=0$ or $B[i]=0$ (or both).
It is known that the two players must exchange $\Omega(n)$ bits of communication in order to solve this problem \cite{Razborov92}, even with randomness, and we will use this lower bound to derive our lower bound for distributed diameter.

Let $A,B\in \{0,1\}^n$ be the two private input strings in an instance $(A,B)$ of Disjointness. We will construct a planar graph $G$ on $O(n)$ nodes based on these strings and show that a CONGEST algorithm that can compute the diameter of $G$ in $T(n)$ rounds implies a communication protocol  solving the instance $(A,B)$ in $O(T(n) \log{n})$ rounds. This is enough to deduce our theorem.

The nodes $V$ of $G$ are partitioned into two types: nodes $V_A$ that ``belong to Alice'' and nodes $V_B$ that ``belong to Bob''.
For each coordinate $i \in [n]$ we have two nodes $a_i \in V_A$ and $b_i \in V_B$.
In addition, there are four special nodes: $\ell,r \in V_A$ and $\ell',r' \in V_B$.
In total, there are $|V_A|+|V_B|=2n+4$ nodes in $G$.

Let us first describe the edges $E$ of $G$ before defining their weights $w:E\to G$.
The edges are independent of the instance $(A,B)$ but their weights will be defined based on the strings.
Every coordinate node $a_i$, for all $i \in [n]$, has two edges: one left-edge (which will be drawn to the left of $a_i$ in a planar embedding) connecting it to $\ell$, and one right-edge connecting it to $r$.
Similarly for Bob's part of the graph, every coordinate $b_i$ has a left-edge to $\ell'$ and a right-edge to $r'$.
Finally, there is an edge connecting $\ell$ with $\ell'$ and an edge connecting $r$ with $r'$.

One way to embed $G$ in the plane is as follows:
The nodes $a_1,\ldots,a_n,b_1,\ldots,b_n$ are ordered in a vertical line with $a_1$ at the top. 
In between $a_n$ and $b_1$ we add some empty space in which we place the other four nodes in $G$ such that $\ell,\ell'$ are to the left of the vertical line and $r,r'$ are to the right, and
 the four nodes are placed in a rectangle-like shape with $\ell,r$ on top and $\ell',r'$ on the bottom.

The final shape (see Figure\ref{fig:diam}) looks like a diamond (especially if we rotate it by 90 degrees) with $\ell,\ell'$ on top and $r,r'$ on the bottom.
It is important to observe that the hop-diameter $D$ of this graph is a small constant, $D=3$.
A crucial property of $G$ for the purposes of reductions from two-party communication problems is that there is a very small cut between Alice's and Bob's parts of the graph: there are only two edges that go from one part to the other ($(\ell,\ell')$ and $(r,r')$).

The main power of this gadget comes from the weights, defined next.
Set $M=4$ (but it will be useful to think of $M$ as a large weight).
\begin{align*}
w(\ell, \ell') &= M \\
w(r, r') & = M \\
w(a_i, \ell) & =     \begin{cases}
      i \cdot M, & \text{if}\ A[i]=0 \\
      i \cdot M + 1, & \text{if}\ A[i]=1 
    \end{cases}
    \\
w(a_i, r) & =     \begin{cases}
      (n+1 - i) \cdot M, & \text{if}\ A[i]=0 \\
      (n+1-i) \cdot M + 1, & \text{if}\ A[i]=1 
    \end{cases}
    \\
w(b_j, \ell') & =     \begin{cases}
      (n+1 - j) \cdot M, & \text{if}\ B[j]=0 \\
      (n+1-j) \cdot M + 1, & \text{if}\ B[j]=1 
    \end{cases}
	\\
w(b_j, r') & =     \begin{cases}
      j \cdot M, & \text{if}\ B[j]=0 \\
      j \cdot M + 1, & \text{if}\ B[j]=1 
    \end{cases}
\end{align*}

\begin{figure}
	\centering
	\scalebox{0.5}{\input{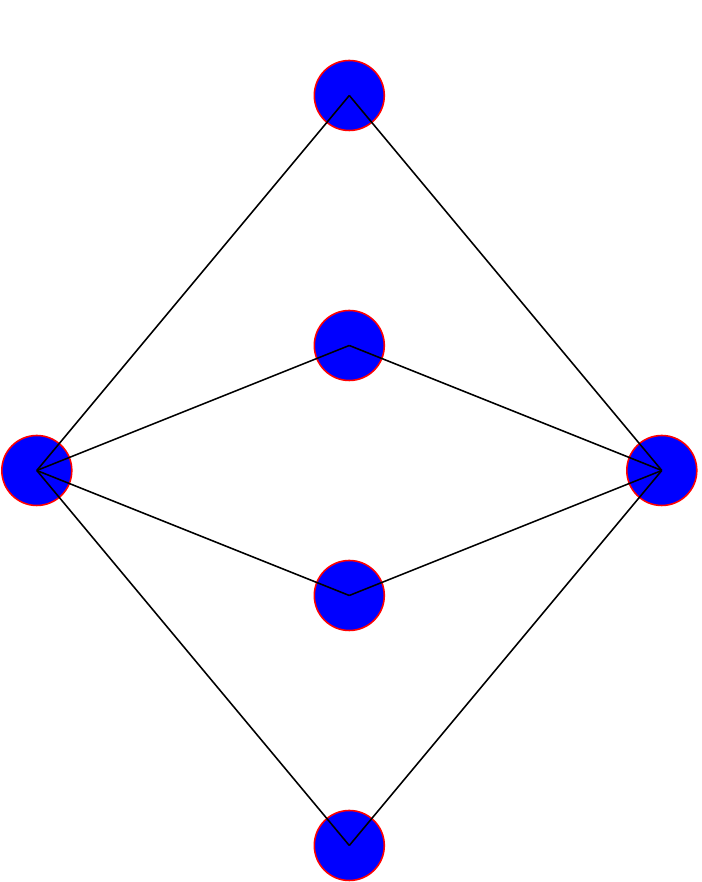_t}}
	\caption{Our basic construction. For the Diameter CONGEST lower bound, the nodes $\ell$ and $r$ are each split into an edge. The complexity of handling this gadget comes from a careful choice of the weights that makes $a_i$ and $b_i$ ``interact'' for all $i \in [n]$, while the other pairs are not effective.}
	\label{fig:diam}
\end{figure}

The key property of this construction is that every pair of nodes in $G$ will have distance less than $ (n+2) \cdot M$ \emph{except for pairs $a_i,b_j$ with $i=j$.} 
And for these special pairs $a_i,b_i$ the distance will be exactly $ (n+2) \cdot M$ plus $0$,$1$, or $2$, depending on $A[i],B[i]$; thus the diameter of $G$ will be affected by whether $A,B$ are disjoint.
Achieving this kind of property is the crux of most reductions from Disjointness to graph problems.
Next we formally show such bounds on the distances in $G$.

\begin{claim}
\label{cl:gadget_main}
The weighted diameter of $G$ is $ (n+2) \cdot M+2$ if there exists an $i\in [n]$ such that $A[i]=B[i]=1$ and it is at most $ (n+2) \cdot M + 1$ otherwise.
\end{claim}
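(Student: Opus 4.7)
My plan is to compute all shortest-path distances by exploiting a simple cut structure: removing the two edges $(\ell,\ell')$ and $(r,r')$ disconnects Alice's side $V_A = \{a_1,\dots,a_n,\ell,r\}$ from Bob's side $V_B = \{b_1,\dots,b_n,\ell',r'\}$. Consequently every shortest path between the two sides uses exactly one of these crossings (using both would waste an extra $2M$ without any possible saving). Moreover, within each side the graph is bipartite with $a_i$ (resp.\ $b_j$) having only two neighbors, so the direct edges give $d(a_i,\ell) = iM + A[i]$, $d(a_i,r) = (n{+}1{-}i)M + A[i]$, and symmetrically on Bob's side.

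The heart of the argument is to compute $d(a_i,b_j)$ by comparing the two candidate paths. The path through $(\ell,\ell')$ has weight $iM + A[i] + M + (n{+}1{-}j)M + B[j] = (n+2-(j-i))\,M + A[i] + B[j]$, while the path through $(r,r')$ has weight $(n+2+(j-i))\,M + A[i] + B[j]$, so $d(a_i,b_j) = (n+2-|j-i|)\,M + A[i] + B[j]$. Two observations are key: for $i=j$ both paths tie at exactly $(n+2)M + A[i] + B[i]$, giving an additive term in $\{0,1,2\}$ that encodes the logical AND of the two bits; for $i \neq j$ the distance is at most $(n+1)M + 2$, which is strictly below $(n+2)M$ since $M = 4 > 2$.

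Both directions of the claim then follow. If some $i$ has $A[i]=B[i]=1$, then $d(a_i,b_i) = (n+2)M + 2$, giving the required lower bound; if no such $i$ exists, then $A[i] + B[i] \le 1$ for every $i$, so $d(a_i,b_i) \le (n+2)M + 1$. To finish I would check that no other pair exceeds these bounds. Same-side pairs and ``mixed'' pairs such as $(\ell,b_j)$, $(a_i,\ell')$, $(a_i,r')$, $(r,b_j)$ are all bounded by $(n+1)M + 1$ by direct computation. The only remaining candidates are the corner pairs $(\ell,r')$ and $(\ell',r)$: their two natural shortest-path candidates route through some $a_i$ (cost $(n+2)M + 2A[i]$) or some $b_j$ (cost $(n+2)M + 2B[j]$), so each of these distances equals $(n+2)M + 2\min(\min_i A[i], \min_j B[j])$. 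This reaches $(n+2)M + 2$ only when both strings are identically $1$, in which case every coordinate already certifies $A[i]=B[i]=1$, so the corner contribution is consistent with the claim in both cases.

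The main subtlety is the bookkeeping in the last paragraph: one has to verify that no ``exotic'' pair sneaks past the diameter bound, and that in the non-intersecting case the corner distances $d(\ell,r')$ and $d(\ell',r)$ drop strictly below $(n+2)M + 2$. Both points follow from the dominance of the direct edges on each side together with the choice $M=4$, but they do require a short case enumeration to write up cleanly.
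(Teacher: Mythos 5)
Your proof is correct and takes essentially the same route as the paper's: a direct case analysis over all pairs, comparing the left ($\ell,\ell'$) and right ($r,r'$) two-edge routes, with the tie at $i=j$ producing the decisive $(n+2)M+A[i]+B[i]$ term and all other pairs falling strictly below $(n+2)M$. Two minor remarks: same-side pairs $a_i,a_j$ with $i+j=n+1$ can reach $(n+1)M+2$ rather than your stated $(n+1)M+1$ (harmless, since $M=4>2$), and your explicit handling of the corner pairs $(\ell,r')$, $(\ell',r)$ is a small completeness bonus, as the paper's case list does not treat pairs inside $\{\ell,\ell',r,r'\}$.
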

\begin{proof}
The proof is by a case analysis on all pairs of nodes $x,y$ in $G$.
We start with the less interesting cases, and the final case is the interesting one (which will depend on $A,B$).
\begin{itemize}
\item If $x=a_i$ and $y \in \{\ell,\ell',r,r'\}$ then the path of length one or two from $x$ to $y$ has weight $d(x,y) \leq n \cdot M +1 + M = (n+1) \cdot M+1$.
\item Similarly for Bob's side, if $x=b_j$ and $y \in \{\ell,\ell',r,r'\}$ then $d(x,y) \leq n \cdot M +1 + M = (n+1) \cdot M+1$.

\item If $x=a_i$ and $y=b_j$ but $i \neq j$ then the shortest path goes through the cheaper of the two ways (left or right). Specifically, the left path $(a_i,\ell,\ell',b_j)$ has weight $(i-j+n+2)\cdot M + \alpha$ for some $\alpha \{0,1,2\}$ (that depends on the strings: $\alpha=A[i]+B[j]$), and the right $(a_i,r,r',b_j)$ path has weight $(j-i+n+2)\cdot M + \alpha$. Thus, if $i<j$ we choose the left path, and if $i>j$ we choose the right path. In either case, $d(x,y) \leq (n+1) \cdot M+2$.

\item If $x=a_i$ and $y=a_j$ then we again have that $i \neq j$ (or else $x=y$) and the shortest path goes through the cheaper of the two ways (left or right). Specifically, the left path $(a_i,\ell,a_j)$ has weight $(i+j)\cdot M + \alpha$ for some $\alpha \{0,1,2\}$, and the right $(a_i,r,a_j)$ path has weight $(2n+2-i-j)\cdot M + \alpha$. Thus, if $i+j<n+1$ we choose the left path, if $i+j> n+1$ we choose the right path, and if $i+j=n+1$ then both options are equally good. In either case, $d(x,y) \leq (n+1) \cdot M+2$.

\item The case that $x=b_i$ and $y=b_j$ is analogous.

\item Now comes the final case of $x=a_i$ and $y=b_i$. These are the special pairs corresponding to the coordinates and their distances are larger than all the other distances in the graph. This happens because the two paths (left or right) have the same weight and are equally ``bad''. This weight is $(n+2) \cdot M+\alpha$ where $\alpha \in \{0,1,2\}$ is equal to $A[i]+B[i]$.
Therefore, if $A,B$ are disjoint, then for all $i \in [n]$ we have $A[i]+B[i] \leq 1$ and so $d(a_i,b_i) \leq (n+2) \cdot M+1$.
Otherwise, if there is an $i \in [n]$ such that $A[i]=B[i]=1$ then $d(a_i,b_i) = (n+2) \cdot M+2$ which will be the furthest pair in the graph. 
Finally, observe that any path from $x$ to $y$ that uses more than three edges cannot be shortest, since its weights will be at least $(n+3) \cdot M$ and $M>3$.
\end{itemize}

\end{proof}

Thus we have constructed a graph $G$ from the strings $(A,B)$ such that diameter of $G$ is at most $(n+2) \cdot M+1$ if and only if $(A,B)$ are disjoint.
To conclude the proof we describe how a CONGEST algorithm for Diameter leads to a two-party communication protocol.
Assume there is such an algorithm for Diameter with a $T(n)$ upper bound on the number of rounds.
To use this algorithm for their two-party protocol, Alice and Bob look at their private inputs and construct the graph $G$ from our reduction. 
Note that all edges in Alice's part are known to Alice and all edges in Bob's part are known to Bob.
The ``common'' edges which have one endpoint in each side are known to both players since they do not depend on the private inputs.
Then, they can start simulating the algorithm.
In each round, each node $x$ sends an $O(\log{n})$-bit message to each one of its neighbors $y$. For the messages sent on ``internal'' edges $(x,y)$, having both endpoints belong to Alice or to Bob, the players can readily simulate the message on their own without any interaction. This is because all information known to $x$ during the CONGEST algorithm will be known to the player who is simulating $x$.
For the two non-internal edges $(\ell,\ell'),(r,r')$ the two players must exchange information in order to continue simulating the nodes.
This can be done by exchanging four messages of length $O(\log{n})$ at each round. 
At the end of the simulation of the algorithm, some node will know the diameter of $G$ and will therefore know whether $(A,B)$ are disjoint.
At the cost of another bit, both players will know the answer. 
The total communication cost is $T(n) \cdot O(\log{n})$.
\end{proof}

\section{Lower Bounds for Closest Pair of Sets and Hierarchical Clustering}
\label{sec:HC}

In this section we prove a lower bound on the time it takes to simulate the output of the Average-Linkage algorithm, perhaps the most popular procedure for Hierarchical Clustering, in planar graphs, thus proving Theorem~\ref{thm:HC}.
We build on the diamond-like gadget from the simple lower bound for diameter. The constructions will combine many copies of these gadgets into one big graph that is also diamond-like.

\subsection{Preliminaries for the reductions}

The starting point for the reductions in this section is the Orthogonal Vectors problem, which is known to be hard under SETH \cite{Wil04} and the Weighted Clique conjecture \cite{ABDN18}.

\begin{definition}[Orthogonal Vectors]
Given a set of binary vectors, decide if there are two that are orthogonal, i.e. disjoint.
\end{definition}

We consider two variants of the closest pair problem. 

\begin{definition}[Closest Pair of Sets with Max-distance]
Given a graph $G=(V,E)$, a parameter $\Delta$, and disjoint subsets of the nodes $S_1,\ldots,S_{m} \subseteq V$, decide if there is a pair of sets $S_i,S_j$ such that
$$
\MaxD (S_i,S_j) = \max_{u \in S_i, v \in S_j} d(u,v) \leq \Delta.
$$
\end{definition}

In the second variant we look at the sum of all pairs within two sets, rather than just the max. 
This definition is used in the Average-Linkage heuristic and it is important for its success.

\begin{definition}[Closest Pair of Sets with Sum-distance]
Given a graph $G=(V,E)$, a parameter $\Delta$, and disjoint subsets of the nodes $S_1,\ldots,S_{m} \subseteq V$, decide if there is a pair of sets $S_i,S_j$ such that
$$
\SumD (S_i,S_j) = \Sigma_{u \in S_i, v \in S_j} d(u,v) \leq \Delta.
$$
\end{definition}

We could also look at the Min-distance. However, it is easy to observe that the corresponding closest pair of sets problem is solvable in near-linear time. It is enough to sort all the edges and scan them once until a non-internal edge is found. Interestingly, there is also a popular heuristic for hierarchical clustering based on Min-distance, called Single Linkage, and it known that Single-Linkage can be computed in near-linear time in planar graphs.

\subsection{Reduction with Max-distance and Complete Linkage}

We start with a simpler reduction which works only in the Max-distance case. The reduction to Sum-distance will be similar in structure but more details will be required.

\begin{theorem}
\label{thm:MaxD}
Orthogonal Vectors on $n$ vectors in $d$ dimensions can be reduced to Closest Pair of Sets with Max-distance in a planar graph on $O(nd)$ nodes with edge weights in $[O(d)]$. 
The graph can be made unweighted by increasing the number of nodes to $O(nd^2)$.
\end{theorem}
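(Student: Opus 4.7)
The plan is to build directly on the diamond gadget of Section~\ref{sec:diam}, constructing a single outer diamond that simultaneously encodes all $n$ vectors of the Orthogonal Vectors (OV) instance. The graph will have four boundary nodes $L, L', R, R'$ joined by two ``heavy'' bridge edges $(L,L')$ and $(R,R')$ of weight $M = \Theta(d)$. For each vector $x_i$ (where $i \in [n]$) and coordinate $k \in [d]$, I will create a \emph{top} node $a_k^{(i)}$ adjacent to $L$ and $R$, and a \emph{bottom} node $b_k^{(i)}$ adjacent to $L'$ and $R'$. The set associated to the $i$-th vector will be $S_i := \{a_k^{(i)} : k \in [d]\} \cup \{b_k^{(i)} : k \in [d]\}$, so $|S_i| = 2d$ and the whole graph has $2nd+4 = O(nd)$ vertices.

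Next, I will choose weights so that on each side the ``left depth'' and ``right depth'' of every coordinate node sum to the same constant $T := 2d$, generalizing the balance property that drives the diameter gadget. Concretely, set $w(a^{(i)}_k, L) = k + x_i[k]$ and $w(a^{(i)}_k, R) = T - k + x_i[k]$, and mirror on the bottom side: $w(b^{(j)}_{k'}, L') = T - k' + x_j[k']$, $w(b^{(j)}_{k'}, R') = k' + x_j[k']$, with $M$ chosen larger than $T+2$. A calculation analogous to the one used for Theorem~\ref{thm:diam} will then show that the shortest $a^{(i)}_k$-to-$b^{(j)}_{k'}$ path uses one of the two bridges and has length $M + T - |k-k'| + x_i[k] + x_j[k']$, maximized in $k, k'$ exactly at $k = k'$, where it equals $M + T + x_i[k] + x_j[k]$. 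This is $M+T+2$ iff $x_i[k] = x_j[k] = 1$ and is at most $M+T+1$ otherwise. Importantly, same-side distances (top-top and bottom-bottom) avoid the bridges and are bounded by $T + 2 < M + T + 1$, so they cannot exceed the top-bottom contribution.

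Combining these observations, for every $i \neq j$ we get $\MaxD(S_i, S_j) = M + T + 2$ if $x_i$ and $x_j$ share a common $1$-coordinate (are \emph{not} orthogonal), and $\MaxD(S_i, S_j) \leq M + T + 1$ otherwise. Setting the Closest Pair threshold $\Delta := M + T + 1$ makes the Closest Pair of Sets with Max-distance instance a ``yes'' exactly when OV is a ``yes''. The natural planar embedding places $L, R, L', R'$ at the corners of a rectangle with the top coordinate nodes drawn between $L$ and $R$ as a planar $K_{2,nd}$-style ``book'' (similarly for the bottom), and the two bridges running along the left and right sides; since each coordinate node has only two incident edges going to opposite corners, this embedding has no crossings. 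All edge weights are $O(d)$. For the unweighted statement, I subdivide each weighted edge of weight $w$ into a path of $w$ unit-weight edges, which preserves all shortest-path distances among the original nodes; with $O(nd)$ edges of weight $O(d)$, this blows up the vertex count to $O(nd^2)$.

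The main obstacle is arranging the weights so that, for \emph{every} pair $(S_i, S_j)$ and \emph{independently} of the vector indices $i,j$, the distance-maximizing pair is always the coordinate-aligned one $k = k'$. The constant-sum depth property on each side is what makes this possible: the two bridge-crossing paths between $a^{(i)}_k$ and $b^{(j)}_{k'}$ have exactly equal lengths when $k = k'$, and any misalignment strictly shortens the minimum of the two paths, regardless of $i$ and $j$. A secondary subtlety is verifying that no detour through another coordinate node can provide a shortcut across the diamond, which is handled by choosing $M$ larger than the largest same-side distance $T+2$, so any path that traverses a coordinate node in place of a bridge incurs an extra cost of at least $T$.
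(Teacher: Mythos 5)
Your proposal is correct and takes essentially the same approach as the paper: the paper merges your four hubs into just two ($\ell$ and $r$, with no bridge edges) and instead scales the coordinate-index weights by $M$, but the key mechanism is identical---aligned coordinates $k=k'$ give two equally long left/right routes so the max-distance between two sets becomes a fixed offset plus $\max_k(v_a[k]+v_b[k])$, while misaligned and same-side pairs are strictly cheaper, and the unweighted case is handled by the same edge-subdivision trick. Your distance case analysis (including ruling out detours through other coordinate nodes via $M>T+2$) is sound, so there is no gap.
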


\begin{proof}

Let $v_1,\ldots,v_n \in \{0,1\}^d$ be an input instance for Orthogonal Vectors and we will show how to construct a planar graph $G$ and certain subsets of its nodes from it.
For each vector $v_k, k\in [n]$ we have a set of $2d$ nodes $S_k=\{ u_{k,1},\ldots, u_{k,d} \} \cup \{ u'_{k,1},\ldots, u'_{k,d} \}$ in $G$. 
Each coordinate $v_k[j]$ is represented by two nodes $u_{k,j}$ and $u'_{k,j}$.
In addition, there are two extra nodes in $G$ that we denote $\ell$ and $r$.
Thus, $G$ contains the $2nd+2$ nodes $S_1 \cup \cdots \cup S_n \cup \{\ell,r \}$.
The edges of $G$ are defined in a diamond-like way as follows.
Every node $u_{k,j}$ or $u'_{k,j}$ is connected with a left-edge to $\ell$ and with a right-edge to $r$.
Thus, $G$ is planar.

The crux of the construction is defining the weights, and it will be done in the spirit of our gadget from the diameter lower bound.
Set $M=4$ as before, and for each $k \in [n]$ and $j \in [d]$ we define:
 \begin{align*}
 w(u_{k,j}, \ell) & =   \begin{cases}
      j \cdot M, & \text{if}\ v_k[j]=0 \\
      j \cdot M + 1, & \text{if}\ v_k[j]=1 
    \end{cases}
        \\
 w(u'_{k.j}, \ell) & =   \begin{cases}
      (2d+1-j) \cdot M, & \text{if}\ v_k[j]=0 \\
      (2d+1-j) \cdot M + 1, & \text{if}\ v_k[j]=1 
    \end{cases}
    \\
 w(u_{k.j}, r) & =   \begin{cases}
      (2d+1-j) \cdot M, & \text{if}\ v_k[j]=0 \\
      (2d+1-j) \cdot M + 1, & \text{if}\ v_k[j]=1 
    \end{cases}
    \\
     w(u'_{k.j}, r) & =   \begin{cases}
j \cdot M, & \text{if}\ v_k[j]=0 \\
      j \cdot M + 1, & \text{if}\ v_k[j]=1 
          \end{cases}
 \end{align*}

Note that all weights are positive integers up to $O(\log{n})$. 

\begin{claim}
For any two sets $S_a,S_b$ we have that 
$$
 \MaxD(S_a,S_b) =   \begin{cases}
      \leq (2d+1) \cdot M+1, & \text{if $v_a,v_b$ are orthogonal} \\
      (2d+1) \cdot M + 2, & \text{otherwise.}  
    \end{cases}
$$
\end{claim}

\begin{proof}
The proof is similar to Claim~\ref{cl:gadget_main} since the subgraph of $G$ induced by two sets $S_a,S_b$ (and the shortest paths between them) is similar to our construction for the diameter lower bound (with $2d$ nodes instead of $n$).

Let $x \in S_a, y \in S_b$ be a pair of nodes, and note that the shortest path between them has only two options: it can either go left (via $\ell$) or right (via $r$).
This is because $M>3$ and any other path will have to use more than two edges which means that it has a subpath of the form $\{r,x,\ell\}$, which has cost at least $(2d+1)\cdot M$ for any $x$, which makes the total weight at least $(2d+2) \cdot M$, but there is always a two-edge path with weight at most $(2d+1) \cdot M + 2$.

 We  divide the analysis to three possible cases:
\begin{itemize}
\item If $x_i=u_{a,i}$ and $y_j=u_{b,j}$ for some $i,j \in [d]$, then their distance is exactly $\min\{ (i + j), (4d+2 - i - j)  \} \cdot M + \alpha$ where $\alpha = v_a[i]+v_b[j]$. From the first term (the left path), the distance is at most $2d \cdot M + 2$.
\item If $x_i=u'_{a,i}$ and $y_j=u'_{b,j}$ for some $i,j \in [d]$, then their distance is also $\min\{ (4d+2 - i - j) , (i + j) \}\cdot M + \alpha$ where $\alpha = v_a[i]+v_b[j]$. Now from the second term (the right path), the distance is at most $2d \cdot M + 2$.
\item Finally, the more interesting case is when $x_i=u_{a,i}$ and $y_j=u'_{b,j}$ for some $i,j \in [d]$ (or vice versa, w.l.o.g.), then their distance is $\left( 2d+1 + \min\{ (i-j) , (j-i) \} \right)\cdot M + \alpha$ where $\alpha = v_a[i]+v_b[j]$. Therefore, if $i \neq j$ the distance is again at most $2d \cdot M + 2$. The only case in which the distance is larger, is when $i = j$, in which case we get $(2d+1) \cdot M + v_a[i]+v_b[i]$. 
\end{itemize}
Therefore, $\MaxD(S_a,S_b) = \max_{i \in [d]} (2d+1) \cdot M + v_a[i]+v_b[i]$ and the claim follows.

\end{proof}

Thus, solving the closest pair problem on $G$ with $\Delta = (2d+1)\cdot M +1$ gives us the solution to Orthogonal Vectors.

The reduction can be made to produce an unweighted graph by subdividing each edge of weight $w$ into a path of length $w$. 
The created nodes do not belong to any of the sets.
The total number of nodes is $O(nd^2)$.

\end{proof}

Next, we present an argument based on this reduction showing that the Complete-Linkage algorithm for hierarchical clustering cannot be sped up even if the data is embedded in a planar graph. 
We give a reduction only to the weighted case; the unweighted case remains open (and seems doable but challenging). 

\begin{theorem}
\label{thm:CL}
If for some $\eps>0$ the Complete-Linkage algorithm on $n$ node planar graphs with edge weights in $[O(\log{n})]$ can be simulated in $O(n^{2-\eps})$ time, then SETH is false.
\end{theorem}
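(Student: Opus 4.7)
The plan is to reduce from Closest Pair of Sets with Max-distance, which Theorem~\ref{thm:MaxD} has just shown is hard under SETH. Given such an instance on a planar graph $G$ with disjoint sets $S_1,\ldots,S_m$ and weights in $[O(\log n)]$, I would build a modified planar graph $H$ on $O(|V(G)|)$ nodes, still with weights in $[O(\log n)]$, so that the merge sequence produced by Complete-Linkage on $H$ reveals the answer.

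The key observation is that Complete-Linkage merges clusters in non-decreasing order of their pairwise max-distance. If we can engineer $H$ so that every intra-$S_k$ pair is at strictly smaller distance than every cross-set pair, then CL must first merge every $S_k$ into a single cluster; at that moment the active clustering is exactly $\{S_1,\ldots,S_m\}$, and by the greedy rule the next merge is the pair $(S_a,S_b)$ attaining $\min_{a\ne b}\MaxD(S_a,S_b)$, which is exactly the Closest Pair of Sets answer. An $O(n^{2-\eps})$ simulator thus extracts this answer from the first cross-set merge in the CL trace and, via Theorem~\ref{thm:MaxD}, refutes SETH.

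The bulk of the technical work goes into enforcing the separation \emph{intra-set max $<$ inter-set min} while preserving the per-coordinate orthogonality encoding of Theorem~\ref{thm:MaxD}. The natural first attempt---attaching a light weight-$1$ spine to tie each $S_k$ together, and inflating the weights of the external edges (those incident to the global hubs $\ell,r$) to leave room above the new intra-set diameter---runs into the following obstacle, which is the main difficulty of the proof. A light spine creates shortcuts on inter-set shortest paths: a node $u_{a,i}$ can reach $\ell$ cheaply by first running along the spine to $u_{a,1}$, thereby replacing the bit $v_a[i]$ tested by the matched-coordinate pair with $v_a[1]$ and collapsing the per-coordinate orthogonality test.

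To resolve this tension, I would calibrate the intra-set connector so that every intra-set detour is at best neutral on any inter-set path---concretely, setting the connector edge weights to at least the ``gradient'' $M$ of the external edges of Theorem~\ref{thm:MaxD}, so that $u_{a,i}\to u_{a,i'}\to \ell$ never beats the direct edge $u_{a,i}\to \ell$---while simultaneously using a \emph{private} tightening structure (for instance, an auxiliary path that is attached to $S_k$ but unreachable from $\ell,r$ except through $S_k$'s original edges, plus a compensating inflation of external weights) to still bring the diameter of each $S_k$ below the inflated inter-set minimum. Showing that such a calibration exists while keeping the graph planar, the weight bound $[O(\log n)]$ preserved, and the inter-set max-distance identity $\MaxD(S_a,S_b)=(2d+1)M + \max_i(v_a[i]+v_b[i])$ intact is a careful but otherwise routine calculation following the proof of Theorem~\ref{thm:MaxD}; the correctness of the first-cross-set-merge extraction is then immediate.
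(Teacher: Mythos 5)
Your proposal follows essentially the same route as the paper's proof: start from the Theorem~\ref{thm:MaxD} construction, tie each $S_k$ together with a path whose edge weights are calibrated against the hub-edge gradient $M$ (the paper uses weight $M+1=5$) while uniformly inflating the hub edges (by $M'=11d$) so that intra-set diameters drop below every inter-set distance without creating shortcuts, and then read off the closest pair of sets from the first inter-set merge in the Complete-Linkage trace. The only points you gloss over are minor: no extra ``private tightening structure'' is actually needed beyond the spine plus the inflation, and at the moment the $S_k$'s have formed, $\ell$ and $r$ are still singleton clusters, so one must also note (as the paper does) that they get absorbed into some $S_i$ at merge value $M'+2dM$, below the revealing merge, without perturbing the inter-cluster max-distances.
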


\begin{proof}
To refute SETH it is enough to solve OV on $n$ vectors of  $d= O(\log{n})$ dimensions in $O(n^{2-\eps})$ time, for some $\eps>0$.
Given such an instance of OV, we construct a planar graph $G$ such that the solution to the OV instance can be inferred from a simulation of the Complete-Linkage algorithm on $G$.

The graph $G$ is similar to the one produced in the reduction of Theorem~\ref{thm:MaxD} with a few additions described next.
First, we add $M'=11d$ to all the edge weights in $G$. This does not change any of the shortest paths, because for all pairs $s,t$ the shortest path has length exactly one if they are adjacent and exactly two otherwise. 
Then, we connect the nodes of each set $S_i$ with a path such that $u_{i,j}$ is connected to $u_{i,j+1}$ for all $j\in[d-1]$, $u_{i,d}$ is connected to $u_{i,1}'$, and $u_{i,j}'$ is connected to $u_{i,j+1}'$ for all $j \in [d-1]$. All these new edges have weight $M+1=5$.
As a result, all nodes within $S_i$ are at distance up to $5 \cdot 2d = 10d$ from each other, but the distance from any $u_{i,j}$ or $u_{i,j}'$ to $\ell$ or $r$ does not decrease (since the new edges are at least as costly as the difference between, e.g., $w(u_{i,j},\ell)$ and $w(u_{i,j+1},\ell)$).

Next, we analyze the clusters generated by an execution of the Complete-Linkage algorithm on $G$: we argue that at some point in the execution, each $S_i$ will be its own cluster (except that the nodes $\ell,r$ will be included in one of these clusters), and that the next pair to be merged is exactly the closest pair of sets (in max-distance).
This is because the algorithm starts with each node in its own cluster, and at each stage, the pair of clusters of minimum Max-distance are merged into a new cluster.
Let the \emph{merge-value} of a stage be the distance of the merged cluster, and observe that this value does not decrease throughout the stages.
The first few merges will involve pairs of adjacent nodes on the new paths we added, in some order (that depends on the tie-breaking rule of the implementation, which we do not make any assumptions about), and the merge value will be $5$. 
After all adjacent pairs are merged, two adjacent clusters will be merged, increasing the merge-value to $10$.
This continues until the merge value gets to $10d$, and at this point, each $S_i$ is its own cluster (since their inner distance is at most $10d$ and their distance to any other node is larger), plus the two clusters $\{\ell\},\{r\}$.
Next, the merge value becomes $M'+2dM$ and each of the latter two clusters will get merged into one of the $S_i$'s (could be any of them).
At this point, the max-distance between any pair of clusters is exactly the max-distance between the corresponding two sets $S_a,S_b$. This is because the nodes $\ell,r$ will not affect the max-distance.
And so if we know the next pair to be merged, we will know the closest pair and can therefore deduce the solution to OV.

\end{proof}

\subsection{Reduction with Sum-distance and Average Linkage}

The issue with extending the previous reductions to the Sum-distance case is that pairs $i,j$ with $i\neq j$ will contribute to the score (even though their distance is designed to be smaller than that of the pairs with $i=j$). Indeed, if we look at $\SumD(S_a,S_b)$ instead of $\MaxD(S_a,S_b)$ for two vectors $a,b$ we will just get some fixed value that depends on $d$ plus $|a|+|b|$ (the hamming weight of the two vectors, i.e. the number of ones). Finding a pair of vectors with minimum number of ones is a trivial problem, since the objective function does not depend on any interaction between the pair. To overcome this issue, we utilize a degree of freedom in our diamond-like gadget that we have not used yet: so far, the left and right edges both have a $+v[i]$ term, but now we will gain extra hardness by choosing two distinct values for the two edges. 
The key property of the special pairs $i,j, i=j$ that we will utilize is not that their distance is larger, but that their left and right paths are equally long. Thus the shortest path can choose either path depending on the lower order terms of the weights, whereas for the non-special pairs the shortest path is constrained by the high order terms. 

The starting point for the reduction will be the Closest Pair problem on binary vectors with hamming weight.
Alman and Williams \cite{AW15} gave a reduction from OV to the bichromatic version of this problem, and very recently a surprising result of C.S. and Manurangasi \cite{SM19} showed that the monochromatic version (which is often easier to use in reductions, as we will do) is also hard. 

\begin{definition}[Hamming Closest Pair]
Given a set of binary vectors, output the minimum hamming distance between a pair of them.
\end{definition}

\begin{theorem}[\cite{SM19}]
Assuming OVH, for every $\eps>0$, there exists $s_\eps>0$ such that no algorithm running in time $O(n^{2-\eps})$ can solve Hamming Closest Pair on $n$ binary vectors in $d=(\log{n})^{s_\eps}$ dimensions.
\end{theorem}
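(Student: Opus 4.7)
The plan is to reduce from the bichromatic version of Hamming Closest Pair, which is hard under OVH by Alman--Williams~\cite{AW15}, to the monochromatic version, collapsing the two color classes into a single set while preserving the YES/NO gap. First, I apply the Alman--Williams reduction to an OV instance of dimension $d = (\log n)^{O(1)}$; this yields two sets $A, B \subseteq \{0,1\}^{d'}$ with $|A|=|B|=n$, $d' = d^{O(1)} = (\log n)^{O(1)}$, together with thresholds $\tau < \tau'$ such that an orthogonal pair exists iff some cross pair $(a,b) \in A \times B$ has $\mathrm{ham}(a,b) \le \tau$, while otherwise every cross pair has distance at least $\tau'$. By standard padding in that reduction we can also assume a small multiplicative or additive gap between $\tau$ and $\tau'$.

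To merge $A$ and $B$ into a single set $S$, I would attach a color-dependent tag together with an inner distance-amplifying encoding to each vector. Pick a binary code $C: \{0,1\}^{d'} \to \{0,1\}^N$ with relative minimum distance close to $1/2$ and length $N = \mathrm{poly}(d')$, fix tag strings $t_A \ne t_B$ of the same length, and define $\tilde v = (C(v), t_{\mathrm{col}(v)})$. The set $S := \{\tilde a : a \in A\} \cup \{\tilde b : b \in B\}$ should then satisfy two properties: same-color pairs are far apart (enforced by the code's minimum distance), while cross-color pairs still reflect the $\tau$ versus $\tau'$ gap. Choose the final threshold $T$ just above $\tau + \mathrm{ham}(t_A, t_B)$; then a YES instance has some cross pair in $S$ within distance $T$, whereas a NO instance has every pair in $S$ at distance strictly above $T$, and the closest pair in $S$ decides the OV instance.

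The main obstacle is that a generic linear code can distort cross distances non-monotonically: $\mathrm{ham}(C(a), C(b)) = \mathrm{wt}(C(a \oplus b))$ need not be monotone in $\mathrm{wt}(a \oplus b)$, which could collapse or invert the YES/NO gap. I would address this by using a concatenated construction with a repetition inner code so that cross-distances stay approximately proportional to the input Hamming distances; equivalently, one can use a randomized encoding that samples a length-$L$ random string $r_i$ per input coordinate and appends $\bigoplus_{i: v_i=1} r_i$ to $\tilde v$, so that cross distances concentrate around an affine function of $\mathrm{ham}(a,b)$ and the YES/NO gap can be amplified by parallel repetition. With these fixes the dimension remains $\mathrm{poly}(d') = (\log n)^{s_\eps}$ for a constant $s_\eps = s_\eps(\eps)$. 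Then an $O(|S|^{2-\eps})$ algorithm for monochromatic Hamming Closest Pair on $|S| = 2n$ points would solve the original OV instance in $O(n^{2-\eps})$ time, contradicting OVH.
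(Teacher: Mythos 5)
There is a genuine gap, and it sits exactly where you flag it but do not resolve it. First, note this theorem is not proved in the paper at all: it is imported verbatim from \cite{SM19}, and the whole point of that work is that the bichromatic-to-monochromatic step is \emph{not} achievable by a generic ``tag plus code'' merge. Your construction has to satisfy two requirements simultaneously: (i) every same-color pair of $S$ must end up strictly farther than the YES threshold $T$, and (ii) the cross-color distances must be preserved finely enough to keep the Alman--Williams gap, which is only \emph{additive} $+1$ (there is no multiplicative gap to speak of, and ``standard padding'' does not create one --- getting any approximation gap for closest pair in $\mathrm{polylog}$ dimension under OVH is the distributed-PCP result of Rubinstein, not a padding trick). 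These two requirements collide: a repetition/concatenated encoding with stretch $L$ preserves cross distances proportionally, but then two distinct same-color vectors at Hamming distance $1$ land at distance $L$, far below the YES cross distance $\approx L\tau + \mathrm{ham}(t_A,t_B)$, so the monochromatic closest pair is a same-color pair and the reduction outputs garbage; conversely, a code with large minimum distance pushes same-color pairs apart but scrambles cross distances non-monotonically, destroying the $\pm 1$ threshold. Your randomized fix (appending $\bigoplus_{i: v_i=1} r_i$) does not escape this: the appended weight is $\approx L/2$ with $\Theta(\sqrt{L})$ fluctuations for \emph{every} distinct pair, same-color and cross-color alike, so it neither separates the colors nor survives an additive-$1$ gap.

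The missing idea is the actual technical core of \cite{SM19}: one needs a gadget in Hamming space of $\mathrm{polylog}(n)$ dimension realizing a biclique-like structure --- all cross-part distances equal (or uniformly small) and all within-part distances strictly larger --- i.e., a dense bipartite graph of low \emph{contact dimension}. Karthik and Manurangsi construct such gadgets algebraically (Reed--Solomon-type code constructions), attach one part to the $A$-vectors and the other to the $B$-vectors, and only then does the merged instance have the property that the global closest pair is forced to be a cross pair whose distance reflects the bichromatic optimum exactly. Without such a gadget (or an argument of comparable strength), the reduction you sketch does not go through, and indeed the obstruction you would need to overcome is precisely why the monochromatic hardness was considered surprising.
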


Next we adapt the reduction from Theorem~\ref{thm:MaxD} to the sum-distance case.

\begin{theorem}
\label{thm:sumD}
Hamming Closest Pair on $n$ vectors in $d$ dimensions can be reduced to Closest Pair of Sets with Sum-distance in a planar graph on $O(nd)$ nodes with edge weights in $[O(d)]$. 
The graph can be made unweighted by increasing the number of nodes to $O(nd^2)$.
\end{theorem}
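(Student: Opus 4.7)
The plan is to reuse the planar diamond topology of Theorem~\ref{thm:MaxD} with the same $2nd+2$ vertices $\{u_{k,j}, u'_{k,j}\}_{k \in [n], j \in [d]} \cup \{\ell, r\}$ and the same left/right edges, but to choose edge weights \emph{asymmetrically} so that the low-order bit carried by $v_k[j]$ appears on the left edge while its complement $1 - v_k[j]$ appears on the right edge. Concretely, with $M=4$ as before I would set
\begin{align*}
  w(u_{k,j},\ell) &= jM + v_k[j], & w(u_{k,j},r) &= (2d+1-j)M + (1 - v_k[j]), \\
  w(u'_{k,j},\ell) &= (2d+1-j)M + v_k[j], & w(u'_{k,j},r) &= jM + (1 - v_k[j]).
\end{align*}
Weights remain integers of magnitude $O(d)$ and planarity is preserved; the two two-edge paths of each special pair $(u_{a,i}, u'_{b,i})$ or $(u'_{a,i}, u_{b,i})$ still tie at the high-order cost $(2d+1)M$, so their low-order perturbations determine the shortest path.

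The first technical step is a case analysis of $d(x,y)$ for $x \in S_a$, $y \in S_b$, in the style of Claim~\ref{cl:gadget_main}. In each of the four classes (same-side $u$--$u$, same-side $u'$--$u'$, and the two cross-types $u$--$u'$ and $u'$--$u$) and each index pair $(i,j)$ with $i \neq j$, the high-order $M$-terms strictly force one of the two two-edge paths, yielding a low-order contribution of either $v_a[i] + v_b[j]$ or $2 - v_a[i] - v_b[j]$. For the $d$ special pairs of each cross-type ($i = j$),
\[
  d(x,y) = (2d+1)M + \min\bigl(v_a[i]+v_b[i],\ 2 - v_a[i] - v_b[i]\bigr),
\]
which equals $(2d+1)M$ if $v_a[i] = v_b[i]$ and $(2d+1)M + 1$ otherwise---the hamming-bit interaction. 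Paths of length more than two are never shorter because $M > 3$.

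I would then sum these distances over all $(x,y) \in S_a \times S_b$ and verify the crucial cancellation of $v$-dependent non-interaction terms. The same-side $u$--$u$ sum contributes $d(|v_a|+|v_b|)$ in low order while the $u'$--$u'$ sum contributes $2d^2 - d(|v_a|+|v_b|)$, exactly canceling the $v$-dependence. In the two non-diagonal cross sums the quantities $v_a[i]+v_b[j]$ and $2-v_a[i]-v_b[j]$ appear with opposite signs across the regions $i<j, i>j$ and across the two cross-types, so after aggregation they telescope to the constant $2d(d-1)$ (the terms in $|v_a|, |v_b|$ and the first moments $\sum_i i\, v_a[i]$, $\sum_i i\, v_b[i]$ all cancel). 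The only surviving $v$-interaction comes from the $2d$ diagonal cross terms and equals $2\,\text{Ham}(v_a,v_b)$. We conclude
\[
  \SumD(S_a, S_b) = C(d) + 2\,\text{Ham}(v_a, v_b)
\]
for a constant $C(d)$ depending only on $d$. Hence the pair $(S_a, S_b)$ minimizing $\SumD$ is exactly the pair $(v_a, v_b)$ minimizing the hamming distance, so choosing $\Delta = C(d) + 2h$ lets us test whether Hamming Closest Pair has an answer below $h$. The unweighted version follows, as in Theorem~\ref{thm:MaxD}, by subdividing each weight-$w$ edge into $w$ unit edges, blowing the node count up to $O(nd^2)$.

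The main obstacle is engineering the asymmetric weights so that this cancellation actually happens; a symmetric choice (as in Theorem~\ref{thm:MaxD}) leaves a large residual $\psi(v_a)+\psi(v_b)$ in $\SumD$ that is uncorrelated with the hamming distance, and hence the minimizer of $\SumD$ need not reveal the minimizer of hamming distance. The tailored choice above makes the $u$--$u$ and $u'$--$u'$ contributions cancel each other, and the two cross-types mirror each other in the non-diagonal regions. A fallback, were such a direct cancellation not available, would be to pre-process the vectors to have uniform hamming weight and uniform first moment (via concatenation with the complement followed by further structured padding), at the cost of a constant blowup in $d$, but the asymmetric weights above avoid this entirely.
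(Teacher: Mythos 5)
Your proposal is correct and matches the paper's own construction: the paper makes exactly the same modification (adding the complement $1-v_k[j]$ on the edges to $r$ while keeping $v_k[j]$ on the edges to $\ell$), so that the same-side and off-diagonal cross contributions cancel to a function of $d$ and $M$ alone and $\SumD(S_a,S_b)=f(d,M)+2\,\HamD(v_a,v_b)$. In fact your write-up spells out the cancellation and the diagonal XOR-type interaction in more detail than the paper's terse claim, and the unweighted subdivision step is identical.
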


\begin{proof}

The construction of the planar graph $G$ from the set of vectors will be similar, with one modification in the weights, to the one in Theorem~\ref{thm:MaxD} but the analysis will be quite different.

As before, for each vector $v_k, k \in [n]$ we have a set of $2d$ nodes $S_k=\{ u_{k,1},\ldots, u_{k,d} \} \cup \{ u'_{k,1},\ldots, u'_{k,d} \}$ in $G$, and we have two additional nodes $\ell,r$.
Each node $u_{k,j}$ or $u_{k,j}'$ is connected to both $\ell$ and $r$.

Set $M=4$ as before and for each $i \in [n],j \in [d]$ we define the edge weights of $G$ as follows.
The difference to the previous reduction is that in the edges to $r$ we add the complement of $v_k[j]$ rather than  $v_k[j]$ itself.

\begin{claim}
For any two vectors $a,b$:
$$
\SumD(S_a,S_b) = f(d,M) +2 \cdot \HamD(v_a,v_b)
$$
where $f(d,M)= O(Md^3)$ depends only on $d$ and $M$.
\end{claim}

\begin{proof}
From the above analysis we get:
$$
\SumD(S_a,S_b) = \sum_{i,j \in [d]}  \left( d(u_{a,i} , u_{b,j} ) + d(u_{a,i}', u_{b,j}')  + d(u_{a,i} , u_{b,j}' ) + d(u_{a,i}' , u_{b,j} )  \right)
$$
$$
=    d  (|a| + |b|) + d  (d-|a| + d-|b|) + 2 \cdot \HamD(v_a,v_b)+  2 f_1(d,M) + f_2(d,M) 
$$
which is equal to the claimed expression with $f(d,M) = 2 f_1(d,M) + f_2(d,M) + 2d^2$.
\end{proof}

Thus, the closest pair of sets $S_a,S_b$ in $G$ will correspond to the pair of vectors $a,b$ that minimize $\HamD(v_a,v_b)$. This completes the reduction.
As before, the graph can be made unweighted by subdividing the edges into paths.
\end{proof}

Finally, we present a lower bound argument for the Average-Linkage algorithm in planar graphs. 
As before, the unweighted case remains open. 

\begin{theorem}
\label{thm:AL}
If for some $\eps>0$ the Average-Linkage algorithm on $n$ node planar graphs with edge weights in $[O(\log{n})]$ can be simulated in $O(n^{2-\eps})$ time, then SETH is false.
\end{theorem}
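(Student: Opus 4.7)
The plan is to adapt the argument used for Theorem~\ref{thm:CL} to the Average-Linkage setting, replacing the max-distance reduction of Theorem~\ref{thm:MaxD} with the sum-distance reduction of Theorem~\ref{thm:sumD} as the source of hardness. Starting from a Hamming Closest Pair instance on $n$ binary vectors in $d = (\log n)^{s_\eps}$ dimensions---SETH-hard by the C.S.-Manurangsi result---we apply Theorem~\ref{thm:sumD} to produce a planar graph $G$ of size $O(nd)$ with edge weights in $[O(d)]$, disjoint vertex sets $S_1,\dots,S_n$ of size $2d$ each, and auxiliary vertices $\ell, r$, satisfying $\SumD(S_a, S_b) = f(d,M) + 2\,\HamD(v_a, v_b)$ for a function $f$ depending only on $d$ and $M$.

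Next, mirroring the Complete-Linkage proof, I would shift every edge weight of $G$ by $M' = \Theta(d)$ and add an internal path of weight-$(M{+}1)$ edges within each $S_i$. These modifications do not shortcut any $\ell$- or $r$-involving shortest path, they bring the intrinsic diameter of each $S_i$ down to $O(d)$, and they blow up every inter-set pairwise distance to $\Omega(Md)$. Since Average-Linkage processes merges in nondecreasing order of average pairwise distance, the first phase merges each $S_i$ into a single cluster of size $2d$ before any inter-set merge is triggered, leaving the partition $\{S_1,\dots,S_n, \{\ell\}, \{r\}\}$.

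The main obstacle, and the step that differs from the Complete-Linkage argument, is the treatment of $\ell$ and $r$. In the max-distance setting they could be absorbed arbitrarily, but in the sum-distance setting their absorption into a cluster $S_a$ perturbs every average distance out of $S_a$ by amounts that depend on the other clusters' vectors, which could obscure the minimum. To neutralize this I would exploit a structural property of the weights of Theorem~\ref{thm:sumD}: because the edges to $r$ use the complements of those to $\ell$, the quantity $w(v,\ell) + w(v,r)$ is a constant for every coordinate vertex $v$, and hence $\sum_{v \in S_c}[d(\ell,v) + d(r,v)]$ is identical for every $c$. Using this, I attach a small planar gadget---a short path of light edges routed along the outer face---that forces $\{\ell\}$ and $\{r\}$ to merge into a single cluster $\{\ell, r\}$ at a merge value smaller than the minimum inter-set average, without creating any new shortest path among coordinate vertices (so the sum-distance identity of Theorem~\ref{thm:sumD} is preserved).

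Once $\{\ell, r\}$ is formed, the average distance from it to every $S_c$ equals an identical constant $K = K(d,M,M')$ independent of $c$. By calibrating $M'$ appropriately---or, if needed, by inflating $\{\ell, r\}$ with a few dummy nodes whose distances to all coordinate vertices are equal---we ensure that $K$ exceeds $\SumD(S_a,S_b)/(2d)^2$ for every pair $a,b$. Under this calibration, Average-Linkage proceeds to merge the pairs $S_a, S_b$ in increasing order of $\SumD(S_a,S_b)$ before ever involving $\{\ell, r\}$, and the very first such merge---observable in any simulation---identifies the pair $S_a, S_b$ minimizing $\SumD$ and, by Theorem~\ref{thm:sumD}, the pair of vectors minimizing $\HamD$. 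A truly subquadratic simulation of Average-Linkage on $G$ would therefore yield a truly subquadratic algorithm for Hamming Closest Pair, refuting SETH.
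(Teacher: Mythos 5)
Your first two steps (the shift by $M'$ and the intra-set paths of weight-$(M{+}1)$ edges, plus the phase analysis showing each $S_i$ coalesces before any inter-set merge) match the paper's proof. The gap is in your treatment of $\ell$ and $r$, and it is not a repairable detail: the strategy of keeping $\{\ell,r\}$ out of the way until after the first $S_a$--$S_b$ merge cannot work with these weights. First, the gadget itself is internally inconsistent: an $\ell$--$r$ connection of total weight $\delta$ creates, for the special pairs, the path $u_{a,1}\to\ell\to\cdots\to r\to u'_{b,1}$ of weight roughly $2M+2M'+\delta$, undercutting the crucial distance $(2d+1)M+2M'$ unless $\delta\gtrsim(2d-1)M$; so a connection cheap enough to merge $\{\ell\}$ and $\{r\}$ early is not "without creating any new shortest path among coordinate vertices." Second, and more fundamentally, the very symmetry you invoke works against you: $d(\ell,v)+d(r,v)$ equals the constant $(2d+1)M+1+2M'$, which is essentially the \emph{maximum} inter-set pairwise distance, so the average distance from any cluster containing $\ell$ and $r$ to each $S_c$ is about half that maximum, i.e.\ $K\approx\tfrac{(2d+1)M}{2}+M'$, while every inter-set average is $\Theta(dM)+2M'$ (all pairs share the same base value $f(d,M)/(2d)^2$, differing only by $O(1/d)$ Hamming terms). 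Hence $K$ lies \emph{below} every inter-set average for every choice of $M'\ge 0$, and raising $M'$ only widens the gap in the wrong direction (inter-set averages gain $2M'$, $K$ gains $M'$). So the hub cluster is absorbed into some $S_c$ before any pure $S_a$--$S_b$ merge, and after absorption it skews that cluster's averages by amounts far exceeding the $\Theta(1/d^2)$ resolution needed to compare $\HamD$ values; the "dummy nodes equidistant to all coordinate vertices" fallback is not constructed and runs into the same obstruction (nodes far from everything merge too late, nodes hanging off $\ell,r$ are themselves close to every $S_c$).

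The paper resolves exactly this difficulty in the opposite way: it accepts that $\{\ell\}$ and $\{r\}$ will each be merged into \emph{some} arbitrary cluster right after the intra-set phase, and instead neutralizes their influence by inflating the clusters' node weight --- each coordinate node is replaced by $k$ copies joined by weight-$\eps$ edges (with each copy attached to $\ell$ and $r$ as before), so clusters have $2dk$ nodes, $\SumD(S_a,S_b)=f'(d,M)+2k^2\HamD(v_a,v_b)$, and the additive contribution of $\ell,r$ to any average distance is a relative $O(1/(kd))$ term; choosing $k\gg d^2$ makes this perturbation smaller than a $\pm1$ change in $\HamD$, so the first merge among the large clusters still identifies the Hamming-closest pair. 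To complete your argument you would need either this size-inflation idea or a genuinely new mechanism for controlling $\ell$ and $r$; as written, the key ordering claim ("merge all pairs $S_a,S_b$ before ever involving $\{\ell,r\}$") is false for your construction.
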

\begin{proof}
The proof is similar in structure to the proof of Theorem~\ref{thm:CL}.
Our graph $G$ will be produced from the graph in the reduction of Theorem~\ref{thm:sumD} by making the following changes:
First, we add $M'=11d$ to all the edge weights. This does not change any of the shortest paths, because for all pairs $s,t$ the shortest path has length exactly one if they are adjacent and exactly two otherwise. 
Then, we connect the nodes of each set $S_i$ with a path such that $u_{i,j}$ is connected to $u_{i,j+1}$ for all $j\in[d-1]$, $u_{i,d}$ is connected to $u_{i,1}'$, and $u_{i,j}'$ is connected to $u_{i,j+1}'$ for all $j \in [d-1]$. All these new edges have weight $M+1=5$.
This makes it so that all nodes within $S_i$ are at distance up to $5 \cdot 2d = 10d$ from each other, but the distance from any $u_{i,j}$ or $u_{i,j}'$ to $\ell$ or $r$ does not decrease.
Finally, we increase the \emph{weight} of all \emph{nodes} in the $S_i$ sets, diminishing the influence that the $\ell,r$ nodes might have on the average distance between two clusters.
This can be done, e.g. by replacing each node by $k$ copies that are all connected with edges of weight $\eps$ in a path (as a subpath of the aforementioned path), and connecting each copy to $\ell,r$ in the same way.

Let us analyze the clusters generated by an execution of the Average-Linkage algorithm on $G$: we argue that at some point in the execution, each $S_i$ will be its own cluster (except that the nodes $\ell,r$ will be included in one of these clusters), and that the next pair to be merged is exactly the closest pair of sets (in sum-distance).
Let the \emph{merge-value} of a stage be the (average-)distance of the merged cluster, and observe that this value does not decrease throughout the stages.
The first few merges will involve pairs of adjacent nodes on the new paths we added, in some order (that depends on the tie-breaking rule of the implementation, which we do not make any assumptions about), and the merge value will be $\eps$. 
When the merge value gets to $10d$, each $S_i$ is its own cluster (since their inner distance is at most $10d$ and their distance to any other node is larger), plus the two clusters $\{\ell\},\{r\}$.
Next, the merge value becomes a bit larger and each of the latter two clusters will get merged into one of the $S_i$'s (could by any of them).
At this point, the closest pair of clusters in average-distance allows us to infer that the corresponding two sets $S_a,S_b$ are the closest pair of sets in sum-distance (and the pair $a,b$ that minimize $\HamD(v_a,v_b)$). 
To see this, first notice that all clusters contain exactly $2dk$ nodes, unless they also contain $\ell$ or $r$ or both (in this case we call them \emph{special clusters}). 
From the proof of Theorem~\ref{thm:sumD} we can conclude that $\SumD(S_a,S_b)$ in our modified $G$ is equal to $f'(d,M) +2k^2 \cdot \HamD(v_a,v_b)$ for any sets $S_a,S_b$.
This is because each coordinate with a mismatch now contributes $+k^2$.
Therefore, the average distance between the clusters is $f'(d,M)/(2kd)^2 +  \HamD(v_a,v_b)/2d^2$ unless they are special.
The average distance between special clusters is a bit smaller, and it can be lower bounded by:
$$
\frac{(2kd)^2}{(2kd+2)^2} \cdot \left( f'(d,M)/(2kd)^2 +  \HamD(v_a,v_b)/2d^2 \right) + \frac{2\cdot (2kd)}{(2kd+2)^2} \cdot 11d 
$$
$$
\geq
( 1 -\frac{4kd +4 }{(2kd+2)^2} )\left( f'(d,M)/(2kd)^2 +  \HamD(v_a,v_b)/2d^2 \right) 
$$
If we set $k >> d^2$ the negative terms (from $\ell,r$) become negligible compared to a $\pm 1$ in $\HamD(v_a,v_b)$. 
Therefore, the next cluster we merge must correspond to the pair $a,b$ that minimize $\HamD(v_a,v_b)$, and we can deduce the solution to the closest pair problem.
\end{proof}

\section{Algorithms for Sparsest Cut and Minimum Quotient Cut}
\label{sec:Algs}

In this section we present our algorithms for Sparsest Cut and Minimum Quotient Cut.

\subsection{Proof of Theorem~\ref{thm:sparsestcut:approx}: An $O(1)$-Approximation for Minimum Quotient Cut in
  near-linear time}


We will describe the algorithm in the dual graph, where cuts are cycles.  Thus the input is a
connected undirected planar graph $G$ with positive integral
edge-costs $\cost(e)$ and integral face-weights $\weight(f)$.  Unless
otherwise specified, $n$ denotes the size of $G$.  We denote the sum
of (finite) costs by $P$ and we denote the sum of weights by $W$.  Given
a cycle $C$, the total cost of the edges of $C$ is denoted $\cost(C)$,
and the total weight enclosed by $C$ is denoted $\weight(C)$, while the total weight outside $C$ is denoted by $\weightbar(C)$.  We
denote by $\lambda(C)$ the ratio
$\cost(C)/{\min}\set{\weight(C), \weightbar(C)}$.  The goal is to find a
cycle $C$ that minimizes $\lambda(C)$.  We give a constant-factor
approximation algorithm. The approximation ratio and the running time
depend on a parameter $\epsilon$, which we assume is a constant.

\paragraph{Overview of the Algorithm.}
Assume that the optimal cut is achieved with the cycle $C$.
Our algorithm has two main parts, both of which combine previously
known techniques with a novel idea. Roughly speaking, the goal of the
first part is to find a node $s$ that is close to $C$, i.e. there is a
path of small cost from $s$ to some node in $C$. the second part will
find an approximately optimal cycle $\hat C$ by starting from a
reasonable candidate that can be computed in near-linear time and then
iteratively improving it using the node $s$.
This idea of finding a nearby node (rather than insisting on a node
that is on the optimal cycle, which incurs an extra $O(n)$ factor)
and then using it to fix a candidate cycle is the crucial one that
lets us improve the running time of the quadratic-time
$3.5$-approximation of Rao~\cite{Rao92} by sacrificing somewhat in the
quality of the solution.

The first part uses a recursive
decomposition of the graph with shortest-path cycle separators, in
order to divide the graph into subgraphs such that the total size of
all subgraphs is $O(n\log n)$ and that we are guaranteed that $C$ will be
in one of them, and, moreover, that for each subgraph there are
$O(1/\epsilon)$ candidate portals $s$ such that one of them is
guaranteed to be close to $C$ (if it is there).

 In the second part,
we make use of the construction of Park and Phillips~\cite{ParkPhillips}
that uses a spanning tree to define a directed graph with edge weights
chosen so that the sum of weights of any
cycle of the tree (if all edges have the same direction) is exactly
the total weight of faces enclosed by the cycle.  Using a classical
technique~\cite{Megiddo}, the problem of finding a cycle with small
cost-to-weight ratio is reduced to the problem of finding a
negative-cost cycle.  The latter problem can be solved in planar graphs in
nearly linear time.

From here, the algorithm and 
analysis follow those of Rao's algorithm~\cite{Rao92}.
The quotient of a cycle $\hat C$ is defined to be the cost of $\hat C$
divided by whichever is smaller, the weight enclosed by $\hat C$ or the weight not
enclosed.  However, the negative-cost cycle technique considers only
the weight enclosed.  Rao provides techniques to address this using
weight-reduction steps.  His algorithm assumes it has correctly
guessed a vertex on the cycle, but the techniques can be adapted to 
work when the vertex is merely close to the cycle.

\subsubsection{Outermost loop}

The outermost loop of the algorithm is a binary search for the
(approximately) smallest value $\lambda$ such that there is a cycle
$C$ for which $\lambda(C) \leq \lambda$.  The body of this loop is a
procedure $\find_\epsilon(\lambda)$ that for a given value of $\lambda$ either
(1) finds a cycle $C$ such that
$\lambda(C) \leq (1+\epsilon)^3 4.5 \lambda$ or (2) determines that
there is no cycle $C$ such that $\lambda(C)\leq \lambda$.  The binary
search seeks to determine the smallest $\lambda$ (to within a factor of $1+\epsilon$) for which $\find_\epsilon(\lambda)$ returns a cycle.
For any fixed value of $\epsilon$, 
because the optimal value (if finite) is between $1/W$ and $P$, the
number of iterations of binary search is $O(\log WP)$.

\subsubsection{Cost loop}

The loop of the $\find_\epsilon(\lambda)$ procedure is a search for the
(approximately) smallest number $\tau$ such that there is a cycle $C$
of cost at most $2\tau$ with $\lambda(C)$ not much more than
$\lambda$.  The body of this loop is a procedure
$\find_\epsilon(\lambda, \tau)$ that either (1) finds a cycle $C$ such that
$\lambda(C) \leq (1+\epsilon)^2 4.59 \lambda$ (in which case we say the procedure
\emph{succeeds}) or (2) determines that there is no cycle $C$ such
that $\lambda(C)\leq \lambda$ and $\cost(C) \leq 2\tau$.  The outer
loop tries $\tau=1, \tau=1+\epsilon, \tau=(1+\epsilon)^2$ and so on,
until $\find_\epsilon(\lambda, \tau)$ succeeds.  The number of iterations is
$O(\log P)$ where $\epsilon$ is a constant to be determined.  In
proving the correctness of $\find_\epsilon(\lambda, \tau)$, we can assume that
calls corresponding to smaller values of $\tau$ have failed.

\subsubsection{Recursive decomposition using shortest-path separators}

The procedure $\find_\epsilon(\lambda,\tau)$ first finds a shortest-path tree
(with respect to edge-costs) rooted at an arbitrary vertex $r$.  The
procedure then finds a recursive decomposition of $G$ using balanced
cycle separators with respect to that tree.  Each separator is a
non-self-crossing (but not necessarily simple) cycle $S=P_1 P_2 P_3$,
where $P_1$ and $P_2$ are shortest paths in the shortest-path tree,
and every edge $e$ not enclosed by $S$ but adjacent to $S$ is
adjacent to $P_1$ or $P_2$.  This property ensures that any
cycle that is partially but not fully enclosed by $S$
intersects $P_1$ or $P_2$.  

The recursive decomposition is a binary tree.  Each node of the tree
corresponds to a subgraph $H$ of $G$, and each internal node is labeled
with a cycle separator $S$ of that subgraph.  The children of a node
corresponding to $H$ and labeled $S$ correspond to the subgraph $H_1$
consisting of the interior of $S$ and the subgraph $H_2$ consisting of
the exterior.  (Each subgraph includes the cycle $S$ itself.)  In
$H_1$ and $H_2$, the cycle $S$ is the boundary of a new face, which is called a
\emph{scar}. 
The scar is assigned a weight equal to the sum of the
weights of the faces it replaced.  Each leaf of the binary tree
corresponds to a subgraph with at most a constant number of faces.
We refer to the subgraphs corresponding to nodes as
\emph{clusters}.

One modification: for the purpose of efficiency, each vertex $v$ on the
cycle $S$ that has degree exactly two after scar formation is
\emph{spliced out}: the two edges $e_1,e_2$ incident to $v$ are replaced with a
single edge whose cost is the sum of the costs of $e_1$ and $e_2$.
Clearly there is a correspondence between cycles before splicing out
and cycles after splicing out, and costs are preserved.  For the sake
of simplicity of presentation, we identify each post-splicing-out
cycle with the corresponding pre-splicing-out cycle.

Consecutive iterations of separator-finding alternate balancing number
of faces with balancing number of scars.  As a consequence, the depth
of recursion is bounded by $O(\log n)$ and each cluster has at most
six scars.  (This is a standard technique.)  Because of the
splicing out, the sum of the sizes of graphs at each level of
recursion is $O(n)$.  Therefore the sum of sizes of all clusters is
$O(n \log n)$.

Let $H$ be a cluster.  Because $H$ has at most six scars, there are at
most twelve paths in the shortest-path tree such that any cycle in the
original graph that is only partially in the cluster must intersect at
least one of these paths (these are the two paths $P_1,P_2$ from above).  We call this the \emph{intersection
  property}, and we refer to these paths as the \emph{intersection}
paths.

Because each scar is assigned weight equal to the sum of the weights
of the faces it replaced, for any cluster $H$ and any simple cycle $C$
within $H$, the cost-to-weight ratio for $C$ in $H$ is the same as the
ratio for $C$ in the original graph $G$.

\subsubsection{Decompositions into annuli}

The procedure also finds $1/\epsilon$ decompositions into \emph{annuli},
based on the distance from $r$.  The \emph{annulus} $A[a,b)$ consists of
every vertex whose distance from $r$ lies in the interval $[a,b)$.
The \emph{width} of the annulus is $b-a$.  Let
$\delta = \epsilon\tau$ and let $\sigma =(1+2\epsilon)\tau$.  For
each integer $i$ in the interval $[0, 1/\epsilon]$, the decomposition
$D_i$ consists of the annuli
$A[i \delta,i\delta+\sigma), A[i \delta+\sigma, i\delta+2\sigma), A[i
\delta+2\sigma, i\delta+3\sigma)$ and so on.  Thus the decomposition
$D_i$ consists of disjoint annuli of width $\sigma$.

\subsubsection{Using the decompositions} \label{sec:using-the-decompositions}

The procedure $\find_\epsilon(\lambda,\tau)$ is as follows:
\begin{tabbing}
  search for a solution in each leaf cluster\\
  for each integer $i\in [0, 1/\epsilon]$\\
  \quad \= for each annulus $A[a,b)$ in $D_i$\\
\> \quad \=  for each non-root cluster $Q$\\
\>  \> \quad \= for each $P$ that is the intersection of the annulus with one of the twelve intersection paths of $Q$\\
\>  \> \> \quad \= form an $\epsilon \tau$-net $S$ of $P$ (take nodes that are $\epsilon \tau$ apart)\\
\>  \> \> \> for each vertex $s$ of $S$\\
\>  \>\> \> \quad \= call subprocedure $\rootedfind_\epsilon(\lambda, \tau, s, R)$\\
\> \>\>\>\>where $R$ = intersection of $A[a,b)$ with the parent of cluster $Q$ 
\end{tabbing}
Here $\rootedfind_\epsilon(\lambda, \tau, s, R)$ is a procedure such that if
there is a cycle $C$ in $R$ with the properties listed below then
the procedure finds a cycle $C$ such
that $\lambda(C) \leq 4.5(1+\epsilon^2) \lambda$ (in which case we say
that the call \emph{succeeds}).

The properties are:
\begin{enumerate}
  \item $\lambda(C) \leq \lambda$, and
  \item $(1+\epsilon)^{-1}2\tau < \cost(C) \leq 2\tau$, and
    \item $C$ contains a vertex $v$ such that the minimum cost of a
      $v$-to-$s$ path is at most $\epsilon \tau$.
\end{enumerate}

In the last step of $\find_\epsilon$, the procedure takes the intersection of an annulus
with a cluster.  Let us elaborate on how this is done.  Taking the intersection with an
annulus involves deleting vertices outside the annulus.  Deleting a
vertex involves deleting its incident edges, which leads to faces
merging; when two faces merge, the weight of the resulting face is
defined to be the sum of weights of the two faces.  This ensures that
the cost-to-weight ratio of a cycle is the same in the subgraph as it
is in the original graph.

We show that $\find_\epsilon(\lambda,\tau)$ is correct as follows.  
If the search for a solution in a leaf cluster succeeds or one of the
calls to $\rootedfind_\epsilon$ succeeds, it follows from the construction that
the cycle found meets the criterion for success of $\find_\epsilon$.
Conversely, suppose
that there is a cycle $C$ in $G$ such that $\lambda(C)\leq \lambda$
and $2(1+\epsilon)^{-1}\tau <\cost(C) \leq 2\tau$.  Our goal is to show that
$\find_\epsilon(\lambda,\tau)$ succeeds.  
Let $Q_0$ be the smallest cluster that
contains $C$.  If $Q_0$ is a leaf cluster then the first line ensures
that $\find_\epsilon(\lambda,\tau)$ succeeds.  Otherwise, $Q_0$ has a child
cluster $Q$ such that $C$ is only partially in $Q$.  Therefore by the
intersection property $C$ intersects one of the intersection paths $P$
of $Q$.  Let $v$ be a vertex at which $C$ intersects $P$.  Let $s$ be
the point in the $\epsilon \tau$-net of $P$ closest to $v$.  

Let $d_{\min}$ be the minimum distance from $r$ of a vertex of $C$, and
let $d_{\max}$ be the maximum distance.  Because $\cost(C) \leq 2\tau$,
we have $d_{\max}-d_{\min} \leq \tau$.  Let $d_{\min}' = {\min} \set{d_{\min},
  \text{distance of $s$ from $r$}}$ and let $d_{\max}'={\max} \set{d_{\max},
  \text{distance of $s$ from $r$}}$.  Then $d_{\max}'-d_{\min}' \leq \tau +
\epsilon \tau$, so there exists an integer $i\in [0, 1/\epsilon]$ and
an integer $j\geq 0$ such that the interval $[i\delta+j\sigma,
i\delta+(j+1)\sigma)$ contains both $d_{\min}'$ and $d_{\max}'$, and
therefore the annulus $A[i\delta+j\sigma,
i\delta+(j+1)\sigma)$ contains $C$ together with the $v$-to-$s$
subpath of $P$.  The specification of $\rootedfind_\epsilon(\lambda, \tau, s,
R)$ therefore shows that the procedure succeeds.

Now we consider the run-time analysis.  The sum of sizes of all leaf
clusters is $O(n)$.  Because each leaf cluster has at most a constant
number of faces, therefore, solutions can be sought in each of the
leaf clusters in a total of $O(n)$ time.

For each integer $i\in [0, 1/\epsilon]$, the annuli of decomposition
$D_i$ are disjoint.  Because the sum of sizes of clusters is
$O(n \log n)$, the sum of sizes of intersections of clusters with
annuli of $D_i$ is $O(n \log n)$. 
Moreover, note that the total size of the $\epsilon \tau$-nets we pick within any annulus of width $O(\tau)$ is $O(1/\epsilon)$.
 Therefore
$O(\epsilon^{-1} n \log n)$ is a bound on the sum of sizes of all
intersections $R$ on which $\rootedfind_\epsilon$ is called.  Therefore in order
to obtain a near-linear time bound for $\find_\epsilon$, it suffices to prove a
near-linear time bound for $\rootedfind_\epsilon$.

\subsubsection{$\rootedfind_\epsilon$}

It remains to describe and analyze $\rootedfind_\epsilon(\lambda, \tau, s, R)$.
We use a construction of Park and Phillips~\cite{ParkPhillips}
together with approximation techniques of Rao~\cite{Rao92}.

Let $T$ be a shortest-path tree of $R$, rooted at $s$.  Delete from the graph
every vertex whose distance from $s$ exceeds $(1+\epsilon)\tau$, and
all incident edges, merging faces as before.  This includes deleting
vertices that cannot be reached from $s$ in $R$.  Let $\hat R$ denote
the resulting graph.  Note that a cycle $C$ in $R$ that satisfies
Properties~2 and~3 (see Section~\ref{sec:using-the-decompositions})
must also be in $\hat R$.

According to a basic fact about planar embeddings (see
e.g.~\cite{planarity}), in the planar dual $\hat R^*$ of $\hat R$, the
set of edges not in $T$ form a spanning tree $T^*$.  Each vertex of
$T^*$ corresponds to a face in $\hat R$ and therefore has an
associated weight.  The procedure
arbitrarily roots $T^*$, and finds the leafmost vertex $f_\infty$ such
that the combined weight of all the descendants of $f_\infty$ is
greater than $W/2$.  The procedure then designates $f_\infty$ as the
infinite face of the embedding of $\hat R$.

\begin{lemma} \label{lem:weight-enclosed} For any nontree edge $e$,
  the fundamental cycle of $e$ with respect to $T$ encloses (with
  respect to $f_\infty$) at most weight $W/2$.
\end{lemma}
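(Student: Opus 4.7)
The plan is to combine planar tree/cotree duality with the leafmost choice of $f_\infty$. Since the edges of $T^*$ are precisely the duals of the nontree edges of $T$, any nontree edge $e$ of $\hat R$ corresponds to an edge $e^* \in T^*$, and removing $e^*$ from $T^*$ partitions its vertex set (i.e., the faces of $\hat R$) into two groups that are exactly the two sides of the fundamental cycle of $e$ in $\hat R$. Consequently, the set of faces enclosed by this cycle with respect to $f_\infty$ is precisely the component of $T^* - e^*$ that does not contain $f_\infty$.

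Next I would root $T^*$ at the root used in the construction, orient $e^* = (p, c)$ with $c$ the child endpoint, and let $T^*_c$ denote the subtree rooted at $c$. The two components of $T^* - e^*$ are then $T^*_c$ and $T^* \setminus T^*_c$, of total weights $w(T^*_c)$ and $W - w(T^*_c)$. If $f_\infty \in T^*_c$, then $T^*_c$ contains the entire subtree rooted at $f_\infty$, whose weight exceeds $W/2$ by the defining property of $f_\infty$, so the enclosed weight $W - w(T^*_c)$ is strictly less than $W/2$.

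If instead $f_\infty \notin T^*_c$, I would split into two subcases. If $c$ is a proper descendant of $f_\infty$, then $T^*_c$ is contained in the subtree rooted at some child of $f_\infty$; the leafmost property forces every such child-subtree to have weight at most $W/2$, so $w(T^*_c) \leq W/2$. If $c$ is not a descendant of $f_\infty$ at all (and $c \neq f_\infty$, since otherwise $f_\infty \in T^*_c$), then $T^*_c$ is disjoint from the subtree rooted at $f_\infty$, which has weight exceeding $W/2$, and hence $w(T^*_c) < W/2$. In every case the enclosed weight is at most $W/2$, as claimed.

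The only real difficulty is the first step: verifying the tree/cotree correspondence so that cutting $T^*$ at $e^*$ really produces the two sides of the fundamental cycle of $e$ in the planar embedding of $\hat R$. This is a standard planar-duality fact, but it is the place where the designation of $f_\infty$ as the infinite face (rather than some arbitrary root of $T^*$) is used to identify which component is the enclosed one, and it should be spelled out rather than left to intuition.
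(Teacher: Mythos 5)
Your proof is correct and is essentially the intended argument: the paper states this lemma without an explicit proof, and the reasoning it relies on is exactly what you supply --- tree--cotree duality identifies the faces enclosed by the fundamental cycle of $e$ with the component of $T^* - e^*$ not containing $f_\infty$, and the rooted structure of $T^*$ together with the leafmost choice of $f_\infty$ bounds the weight of that component by $W/2$ in every case (child subtree containing $f_\infty$, subtree below a child of $f_\infty$, or subtree disjoint from that of $f_\infty$). The only interpretive point is that you read the paper's ``combined weight of all the descendants of $f_\infty$'' as the weight of the subtree rooted at $f_\infty$ (so that every child subtree of $f_\infty$ has weight at most $W/2$), which is indeed the reading under which the lemma holds, and your case analysis is complete.
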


Park and
Phillips describe a construction, which applies to any spanning tree
of a planar graph with edge-costs and face-weights, and this
construction is used in  $\rootedfind_\epsilon$.  Each undirected edge of $\hat R$
corresponds to two \emph{darts}, one in each direction.  Each dart is
assigned the cost of the corresponding edge.  A dart corresponding to
an edge of $T$ is assigned zero weight.  Let $d$ be a nontree dart.
Define $w_d$ to be the weight enclosed (with respect to $f_\infty$) by the
fundamental cycle of $d$ with respect to $T$.  Define the weight of
$d$, denoted $\weight(d)$, to be $w_d$ if the orientation of $d$ in the the fundamental cycle
is counterclockwise, and $-w_d$ otherwise.  We refer to this graph as
the \emph{weight-transfer graph}.

\begin{lemma}[Park and Phillips]  The sum of weights of darts of a counterclockwise
  cycle $C$ is the amount of weight enclosed by the cycle.
\end{lemma}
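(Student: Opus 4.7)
The plan is to exploit the fact that the fundamental cycles $\set{F_e : e \notin T}$ form a basis of the cycle space of $\hat R$, combined with the additivity of ``enclosed weight'' over signed sums of cycles. Since tree darts have weight $0$ by definition, $\sum_{d \in C} \weight(d)$ reduces to a sum over non-tree darts of $C$, each contributing $\pm w_d$ according to orientation. What must be shown is that this signed sum matches the weight enclosed by $C$.

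First, I would work in the $\mathbb{Z}$-valued cycle space, in which every oriented cycle is a formal $\pm 1$ combination of darts. For each non-tree edge $e$ appearing in $C$, let $F_e$ denote its fundamental cycle oriented counterclockwise, and let $\sigma_e \in \set{+1,-1}$ be the sign such that the non-tree dart of $C$ at $e$ appears with coefficient $+1$ in $\sigma_e F_e$. I claim that
\[
C \;=\; \sum_{e \notin T,\, e \in C} \sigma_e\, F_e
\]
as $1$-chains. By construction both sides agree on every non-tree dart, so their difference is supported on tree darts; but the difference is also a $1$-cycle (both sides are), and a cycle supported on the forest $T$ must vanish.

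Next, I would invoke additivity of enclosed weight. For any $1$-cycle $Z$, define the \emph{signed enclosed weight} as $\sum_f \weight(f)\, n_Z(f)$, where $n_Z(f)$ is the winding number of $Z$ about face $f$ relative to the chosen infinite face $f_\infty$. This functional is linear in $Z$ by definition. For a simple counterclockwise cycle, the winding number is $+1$ at every enclosed face and $0$ outside, so the signed enclosed weight coincides with the ordinary enclosed weight: it equals $w_e$ for $F_e$ oriented CCW, and equals the weight enclosed by $C$ for $C$ itself.

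Combining these two ingredients by linearity gives
\[
\text{weight enclosed by } C \;=\; \sum_{e \notin T,\, e \in C} \sigma_e\, w_e,
\]
and a direct case analysis on whether $d$'s direction in $C$ matches the CCW orientation of $F_e$ shows $\sigma_e\, w_e = \weight(d)$ for the corresponding non-tree dart $d$: matching orientations give $\sigma_e = +1$ and $\weight(d) = +w_e$, while reversed orientations flip both signs. The delicate point, and the reason the ``counterclockwise'' hypothesis on $C$ is essential, is the signed cycle-space decomposition: one must work over $\mathbb{Z}$ with oriented $1$-chains rather than modulo $2$, so that the individual $\pm w_e$ contributions interfere with the correct signs to reproduce the enclosed weight of $C$.
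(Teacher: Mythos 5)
Your argument is correct. Note first that the paper itself offers no proof of this statement: it is imported verbatim from Park and Phillips \cite{ParkPhillips}, so there is no in-text proof to compare against. Judged on its own, your proof is sound and essentially the standard argument. The decomposition $C=\sum_{e\notin T,\,e\in C}\sigma_e F_e$ over $\mathbb{Z}$ is justified exactly as you say (the two sides agree on all non-tree darts since $F_{e'}$ contains no dart of $e\neq e'$, and a zero-boundary $1$-chain supported on the forest $T$ is zero), and the functional $Z\mapsto\sum_f \weight(f)\,n_Z(f)$ is linear and agrees with ordinary enclosed weight on simple counterclockwise cycles, which covers both $C$ and each $F_e$; the sign bookkeeping $\sigma_e w_e=\weight(d)$ matches the paper's definition of dart weights. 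Two small remarks: (i) you should say explicitly that tree darts contribute $0$ so the sum over darts of $C$ is a sum over its non-tree darts only (you use this implicitly), and that $C$ cannot consist solely of tree edges; (ii) the argument as written needs $C$ simple so that ``weight enclosed'' is unambiguous, which is the intended reading of the lemma — though a pleasant byproduct of your winding-number formulation is that the identity persists, with enclosed weight interpreted in the signed (winding-number) sense, for the near-simple cycles that actually arise later in the algorithm, since linearity does not care about simplicity.
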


We adapt an approximation technique of Rao~\cite{Rao92}.  (His method
differs slightly.)  The procedure selects a collection of candidate
cycles; if any candidate cycle has quotient at most $4.5 \lambda$,
the procedure is considered to have succeeded.  We will show that if
$\hat R$ contains a cycle with properties~1-3 (see
Section~\ref{sec:using-the-decompositions}) then one of the candidate
cycles has quotient at most $4.5\lambda$.
 
Recall that $W$ is the sum of weights.  We say a dart $d$ is
\emph{heavy} if $\weight(d) \geq \beta W$, where we set $\beta=1/9$.  For each heavy dart, the
procedure considers as a candidate the fundamental cycle of $d$.

We next describe the search for a cycle in the weight-transfer graph
minus heavy darts.  
Following a basic technique (see~\cite{Lawler,Megiddo}), we define a modified
cost per dart as $\widehat{\cost}(d) = \cost(d) - \lambda \weight(d)$.
A cycle has negative cost (under this cost assignment) if and only
if its ratio of cost to enclosed weight is less than $\lambda$. Note that the actual quotient of such a cycle may be much larger than $\lambda$ since we must divide by the \emph{min} of the weight inside and the weight outside the cycle. Still, the information we get from such a cycle will be sufficient for getting a cycle that has quotient not much larger than $\lambda$.

The procedure seeks a negative-cost cycle in this graph.  Using the
algorithm of Klein, Mozes, and Weimann~\cite{KleinMozesWeimann}, this
can be done in $O(n \log^2 n)$ time on a planar graph of size $n$.

Suppose the algorithm does find a negative-cost cycle $\hat C$.  If
$\hat C$ encloses at most $\alpha W$ weight, where we will set $\alpha=5/9$ then $\hat C$ is a
candidate cycle. (In this case, the denominator in the actual quotient of $\hat C$ is not much smaller.)

Otherwise, the procedure proceeds as follows. (Here, the denominator is much smaller, so we would like to fix $\hat C$ so that it encloses much less weight, but the cost does not increase by much.)  It first modifies the
cycle to obtain a cycle $C_0$ that encloses the same amount of weight
and that includes the vertex $s$.  This step consists in adding to
$\hat C$ the shortest path from $s$ to $\hat C$ and the reverse of
this shortest path.  Because the shortest path is in $\hat R$, this
increases the cost of the cycle by at most $2(1+\epsilon)\tau$. (The new cycle will be easier to fix.)

The new cycle $C_0$ might not be a simple cycle: it has the form
illustrated in Figure~\ref{fig:Near-simple_cycle}: it is mostly a
simple cycle but contains a path and its reverse, such that $s$ is an
endpoint of the path.  We refer to such a cycle as a \emph{near-simple
  cycle}.

\begin{figure} \centering
  \includegraphics{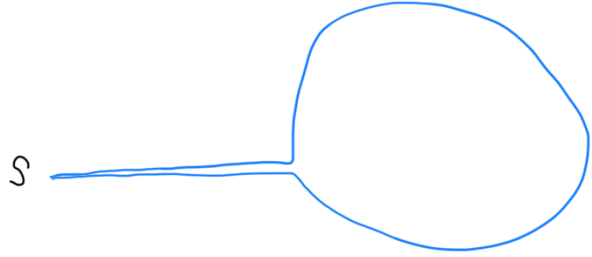}
  \caption{This diagram illustrates the structure of a cycle arising
    in the algorithm.  The cycle is nearly simple but includes a path
    and its reverse, where one endpoint of the path is the root $s$.}
  \label{fig:Near-simple_cycle}
\end{figure}


Next the algorithm iteratively modifies the cycle so as to reduce the
weight enclosed without increasing the cost.  In each iteration, the
algorithm considers the current cycle $C_i$ as a path starting and ending at
$s$, and identifies the last dart $xy$ with $\weight(xy)>0$ in this
path.  The algorithm then finds the closest ancestor $u$ of $x$ in $T$
among vertices occurring after $y$ in the current path.  The
algorithm replaces the $x$-to-$u$ subpath of the current path with the
$x$-to-$u$ path in $T$.  Because the $x$-to-$u$ path in $T$ is a
shortest path, this does not increase the cost of the current path.
It reduces the enclosed weight by at most the weight of $xy$.  This
process repeats until the enclosed weight is at most $\alpha W$.

Here we restate the process, which we call \emph{weight reduction}:
\begin{tabbing}
while $C_i$ encloses weight more than $\alpha W$\\
\quad \= write $C_i = s P_1 xy P_2 s$\\
\> where $xy$ is a positive-weight dart and $P_2$ contains no such dart\\
\> let $u$ be the closest ancestor of $x$ in $T$ among vertices in
$P_2$\\
\> let $C_{i+1} =s P_1 P_3 s$ where $P_3$ is the
$x$-to-$u$ path in $T$\\
\> let $i = i+1$
\end{tabbing}

\begin{lemma} \label{lem:produces-near-simple-cycle}
  The result of each iteration is a near-simple cycle.  The enclosed
  weight is reduced by less than $\beta W$.
\end{lemma}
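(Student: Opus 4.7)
My plan is to prove the two parts of the lemma separately. For the structural claim that $C_{i+1}$ is near-simple, I would exploit the specific choice of $u$ as the closest $T$-ancestor of $x$ among vertices appearing after $y$ in $P_2$. This choice guarantees that the tree path $P_3$ from $x$ to $u$ cannot share any interior vertex with the $y$-to-$u$ segment of $P_2$, call it $P_2^{(1)}$: any such shared vertex would be a closer ancestor of $x$ lying in $P_2$, contradicting the definition of $u$. So no new intersection is introduced on the $P_2$ side. I would then examine the possible overlaps of $P_3$ with $P_1$ and with the existing spur of $C_i$; in each case the overlapping edges either extend the spur or are absorbed into its reverse counterpart, so the result $C_{i+1} = sP_1 P_3 P_2^{(2)} s$, where $P_2^{(2)}$ is the $u$-to-$s$ tail of $P_2$, still has the form of a simple cycle together with a single path and its reverse meeting at $s$.

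For the weight bound, I would use the Park--Phillips identity stated just above the lemma: for a counterclockwise near-simple cycle, the sum of dart weights equals the enclosed weight, since each spur dart is traversed in both orientations and so contributes zero. Writing out the difference,
\begin{equation*}
\weight(C_i) - \weight(C_{i+1}) \;=\; \weight(xy) + \weight(P_2^{(1)}) - \weight(P_3).
\end{equation*}
Since $P_3 \subseteq T$, all of its darts have weight zero, and since $P_2$ contains no positive-weight dart, $\weight(P_2^{(1)}) \leq 0$. Therefore the reduction is at most $\weight(xy)$. Finally, I would argue that $xy$ is not heavy: heavy darts have already been handled at the start of $\rootedfind_\epsilon$ via their fundamental cycles taken as candidates, so during weight reduction we may restrict attention to darts of weight strictly less than $\beta W$, which yields the claimed bound.

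The main obstacle I anticipate is the structural part. The cleanest argument seems to require tracing the cycle along the planar embedding and case-checking all possible interactions of the tree path $P_3$ with $P_1$ and with the spur; a priori $P_3$ could re-enter the cycle at a vertex on $P_1$ or on the spur and threaten to break the near-simple form by creating a second detached loop. The forcing of $u$ to lie on $P_2$ as the closest such ancestor is the combinatorial hook that keeps the topology under control, but writing the case analysis rigorously will require some care with planar orientations.
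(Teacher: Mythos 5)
The genuine gap is in the structural half of the lemma, and it is exactly the case you flag as the ``main obstacle'' and then resolve by assertion. The choice of $u$ controls only vertices occurring after $y$: no interior vertex of $P_3$ can lie in $P_2$, and since the spur out of $s$ is traversed again, reversed, after $y$ (note $xy$, having positive weight, lies on the simple part of $C_i$, because the spur and all previously substituted paths are $T$-paths whose darts have weight zero), the same argument in fact also keeps the interior of $P_3$ off the spur --- an observation your sketch does not make. What the choice of $u$ does \emph{not} control is the remaining portion of $P_1$: the stretch of the simple part of $C_i$ between the spur's attachment vertex and $x$. If an interior vertex $w$ of $P_3$ (an ancestor of $x$ strictly below $u$) lands there, the new closed walk visits $w$ twice and pinches into two loops joined at $w$, which is not of the near-simple form (one simple cycle plus a single path and its reverse with endpoint $s$). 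Your claim that such overlaps ``either extend the spur or are absorbed into its reverse counterpart'' is not justified, and neither the maximality of $xy$ (last positive-weight dart) nor planarity obviously forbids an ancestor of $x$ from lying on that earlier stretch. You would need either to rule this configuration out or to show how to repair the walk at the first such intersection without increasing cost or spoiling the weight accounting; as written, the first sentence of the lemma is unproven in your proposal.

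The weight bound is essentially right, with two caveats. First, the correct reason $\weight(xy)<\beta W$ is that the negative-cost cycle $\hat C$ is found in the weight-transfer graph \emph{minus heavy darts}, while every other dart of any $C_i$ (spur and replacement paths, taken in $T$) has weight zero, so every dart ever chosen as $xy$ is a non-heavy dart of $\hat C$; saying that heavy darts were ``already handled'' as fundamental-cycle candidates does not by itself remove them from the cycle being modified. Second, your identity for $\weight(C_i)-\weight(C_{i+1})$ applies the Park--Phillips dart-weight identity to $C_i$ and $C_{i+1}$, which presupposes near-simplicity (or a winding-number interpretation of enclosed weight), so this half leans on the structural half you have not established. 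For what it is worth, the paper states this lemma without a written proof and only asserts in prose that each iteration reduces the enclosed weight by at most $\weight(xy)$; your weight computation matches that, but the near-simplicity claim is where the real work lies.
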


\subsubsection{Analysis}

We will show that if $\hat R$ contains a cycle $C$ with properties~1-3 (see
Section~\ref{sec:using-the-decompositions}) then one of the candidate
cycles considered by the procedure has quotient at most $4.5\lambda$.  There are three cases.

\subsubsection*{Heavy dart}

Suppose $C$ contains a heavy dart $xy$.  Recall that Property~3 is
that $C$ contains a vertex $v$ such that the minimum cost of a
      $v$-to-$s$ path is at most $\epsilon \tau$.
Let $c_1$ be the cost of the $y$-to-$v$ subpath of $C$, and let $c_2$
be the cost of the $v$-to-$x$ subpath.  Let $u$ be the leafmost common
ancestor of $x$ and $y$ in $T$.  Let $C_{xy}$ denote the fundamental cycle of $xy$
with respect to $T$.  This cycle consists of $xy$, the $y$-to-$u$ path in $T$, and
the $u$-to-$x$ path in $T$.  Let $c_3$ be the cost of the $y$-to-$u$
path in $T$, and let $c_4$ be the cost of the $u$-to-$x$ path.
By the triangle inequality,
\begin{eqnarray*}
  c_3 &\leq& c_1 + \epsilon \tau\\
  c_4 & \leq & c_2 + \epsilon \tau
\end{eqnarray*}               
so $c_3+c_4+\cost(xy) \leq c_1+c_2+2\epsilon \tau+\cost(xy)$, showing
that the cost of $C_{xy}$ exceeds the cost of $C$ by at most
$2\epsilon \tau$.

By Lemma~\ref{lem:weight-enclosed}, $C_{xy}$ encloses weight at most
$W/2$, so its quotient is $\cost(C_{xy})/w_{xy}$.  
Because $xy$ is a heavy dart, the denominator is at least $\beta W$.
Therefore the quotient of $C_{xy}$ is
\begin{eqnarray*}
  \frac{\cost(C_{xy})}{w_{xy}} & \leq & \frac{\cost(C_{xy})}{\beta W}\\
                               & \leq &\frac{\cost(C)+2\epsilon\tau}{\beta W}\\
                               & \leq & \frac{(1+\epsilon(1+\epsilon))\cost(C)}{\beta W}\\
                               & \leq & \frac{1+\epsilon(1+\epsilon)}{2\beta} \frac{\cost(C)}{W/2}\\
                               & \leq &\frac{1+\epsilon(1+\epsilon)}{2\beta} \lambda
\end{eqnarray*}
We chose $\beta=1/9$, so the the quotient is at most $4.5(1+\epsilon +\epsilon^2)\lambda$.

\medskip

\paragraph{No heavy dart.} Suppose that $C$ contains no heavy dart.  In this case, $C$ is a cycle
in the weight-transfer graph minus heavy darts.  The modified cost
function $\widehat{\cost}(\cdot)$ ensures that $C$ is a negative-cost
cycle.  Therefore in this case the algorithm must succeed in finding a
negative-cost cycle $\hat C$.  There are two cases, depending on whether $\hat C$ encloses more or less than $\alpha W$ weight.

\subsubsection*{Small weight inside negative-cost cycle}

Suppose the weight enclosed is at most $\alpha W$.  As a subcase, if
the weight enclosed is at most $W/2$ then the denominator in the
quotient for $\hat C$ is in fact the weight enclosed.  Because $\hat
C$ is a negative-cost cycle, it follows that the quotient of $\hat C$ is less
than $\lambda$.

Suppose therefore that the weight enclosed is greater than $W/2$ but at most
$\alpha W$.  Then the quotient
of $\hat C$ is 
$$\cost(\hat C)/(\text{weight not enclosed by } \hat
C),$$ which is at most
\begin{eqnarray*}
  \frac{\cost(\hat C)}{(1-\alpha)W} & \leq & \frac{\alpha}{1-\alpha}
                                             \frac{\cost(\hat C)}{\alpha W}\\
  & < & \frac{\alpha}{1-\alpha} \lambda
\end{eqnarray*}
The choice $\alpha=5/9$ implies that $\frac{\alpha}{1-\alpha} \leq
4.5$, so in this case the quotient is at most $4.5 \lambda$.

\subsubsection*{Large weight inside negative-cost cycle}

Finally, suppose $\hat C$ encloses more than $\alpha W$ weight. In
this case, the procedure  increases the cost by at most
$2(1+\epsilon)\tau$ and then uses weight reduction.  In each iteration
of weight reduction, the enclosed weight is reduced by an amount that
is less than $\beta W$.  The process stops when the weight is at most
$\alpha W$, so the final weight enclosed is at least $\alpha W - \beta
W$.  Plugging in $\alpha=5/9$ and $\beta=1/9$, we infer that the
denominator in the quotient, the smaller of the weight enclosed and
the weight not enclosed, is at least $(4/9) W$

Therefore the quotient of the final cycle is at most
\begin{eqnarray*}
  \frac{\cost(\hat C)+2(1+\epsilon)\tau}{(4/9) W} & \leq &
              \frac{2(1+\epsilon)^2}{4/9} \frac{\cost(\hat C)}{W}\\
  & \leq & \frac{2(1+\epsilon)^2}{4/9} \frac{\cost(\hat C)}{\text{weight enclosed by } \hat C}\\
 & \leq & \frac{2(1+\epsilon)^2}{4/9} \lambda
\end{eqnarray*}
where the last inequality follows because $\hat C$ is a negative cycle with the modified weights. 

In this case, the quotient is at most $4.5(1+\epsilon^2)]\lambda$.

\subsection{Proof of Theorem~\ref{thm:sparsestcut:exact}: An exact algorithm for Sparsest Cut with running time
  $O(n^{3/2} W)$}
In this section, we provide an exact algorithm for Sparsest Cut and the
Minimum Quotient problems running in time $O(n^{3/2} W \log(C))$. This
improves upon the algorithm 
of Park and Phillips~\cite{ParkPhillips} running in time $O(n^2 W \log(C))$.
We first need to recall their approach (see~\cite{ParkPhillips} for all details).

The approach of Park and Phillips~\cite{ParkPhillips} works as follows.
It works in the dual of the input
graph and thus looks for a cycle $C$ minimizing $\ell(C)/\min(w(\text{Inside}(C)),w(\text{Outside}(C)))$,
where $\ell(C)$ is the sum of length of the dual of the edges of $C$ and $w(\text{Inside}(C))$ (resp.
$w(\text{Outside}(C))$) is the total weight of the vertices of $G$ whose corresponding faces in $G^*$
are in $\text{Inside}(C)$ (resp. $\text{Outside}(C)$). Park and Phillips show that the approach
also works for the sparsest cut problem.
Their algorithm is as follows:
\begin{itemize}
\item[] Step 1. Construct an arbitrary spanning tree $T$ and order the vertices with a preorder
  traversal of $T$ which is consistent with the cyclic ordering of edges around each vertex. 
\item[] Step 2. For each edge $(u,v)$ of $G^*$, create two directed edges $e_1 = \langle u,v \rangle$ and
  $e_2 = \langle v,u \rangle$ and assume $u$ is before $v$ in the ordering computed at Step 1.
  Define the length of $e_1$ and $e_2$ to be the length of
  the dual edge of $e$, define the weight of $e_1$ to be the total weight of vertices
  enclosed by the fundamental cycle induced by $e$ (for the edges of $T$ the
  weight is 0) and the weight of $e_2$ to be minus the weight of $e_1$.
\item[] Step 3. Construct a graph $\mathcal{G}$ as follows: for each vertex $v$ of $G^*$,
  for each weight $y \in [W]$, create a vertex $(v,y)$. For each directed edge $\langle u,v \rangle \in G^*$,
  for each $y \in [W]$, create an edge  between vertices $(u,y)$ and $(v, y+w(\langle u,v \rangle))$ where
  $w(\langle u,v \rangle)$ is the weight of the edge $\langle u,v \rangle$ as defined at Step 2. The length
  of the edge created is equal to the length of $\langle u,v \rangle$.
\end{itemize}

Let $\mathcal{P}$ be the set of
all shortest paths $P_{u,y}$ from $(u,0)$ to $(u, y)$ in $\mathcal{G}$ for $y \in [W/2]$, for each vertex $u \in G^*$.
Let $P^*$ be a shortest path of $\mathcal{P}$ that achieves  $\min_{u \in G^*, y \in [W/2]}\ell(P_{u,y})/y$.
Park and Phillips show that $P^*$ corresponds to a minimum quotient cut of $G$.
The running time $O(n^2 W \log C)$ of the algorithm follows from applying a single source shortest path (SSSP) algorithm
for each vertex $(u,0)$ of $\mathcal{G}$. Since $\mathcal{G}$ has $O(nW)$ vertices, these $n$ SSSP computations
can be done in time $O(n^2W \log C)$.

\paragraph{The Improvement.}
We now show how to speed up the above algorithm. Consider taking an $O(\sqrt{n})$-size balanced separator $S$ of $G^*$.
We make the following observation: either $P^*$ intersects $S$, in which case it is only needed to perform a single source
shortest path computation from each vertex $(u,0)$ of $\mathcal{G}$, where $u \in S$, or $P^*$ does not intersect $S$ and
in which case we can simply focus on computing the minimum quotient cut on each side of the separator separately (treating
the other side as a single face of weight equal to the sum of the weights of the faces in the side).

More precisely, our algorithm is as follows:
\begin{itemize}
\item[] Step 1. Compute an $O(\sqrt{n})$-size balance separator $S$ of $G^*$, that separates $G^*$ into two components
  $S_1,S_2$ both having size $|S_1|,|S_2| \le 2n/3$;
\item[] Step 2. Compute $\mathcal{G}$ and perform an SSSP computation from each vertex $(u,0) \in \mathcal{G}$ where
  $u \in S$ and let $P_0^*$ be the shortest path $P_{u,y}$ from $(u,0)$ to $(u,y)$ for $u \in S$, $y \in [W/2]$ that minimizes
  $\ell(P_{u,y})/y$.
\item[] Step 3. For $i \in {1,2}$, create the graph $G_i$ with vertex set $S_i$ and where the face containing $S_{3-i}$ has weight equal
  to the sum of the weights of the faces in $S_{3-i}$.
\item[] Step 4. Returns the minimum quotient cut among $P^*_0,P_1^*,P^*_2$.
\end{itemize}

The correctness follows from our observation: if $P^*$ intersects $S$ then, by~\cite{ParkPhillips}, Step 2 ensures that $P^*_0$ corresponds
to a minimum quotient cut, otherwise $P^*$ is strictly contained within $S_1$ or $S_2$ and in which case the following argument applies.
Assuming $P^*$ lies completely within $S_1$, then the graph $G_1$ where the face containing $S_2$ has weight equal to the sum of the
weight of faces in $S_2$ contains a minimum quotient cut of value at most the minimum quotient cut of $G$ since
the cut places all the vertices of $S_2$ on one side. Hence, an immediate induction shows that the minimum quotient
cut among the cuts induced
by the paths $P^*_0,P^*_1,P^*_2$  is optimal.
The running time follows from a direct application of the master theorem.

\medskip
\paragraph{Acknowledgments}
P. Klein is supported by NSF Grant CCF-1841954, and V. Cohen-Addad supported by
ANR-18-CE40-0004-01.

\bibliographystyle{plainurl}
\bibliography{refs}

\end{document}